\newtheoremstyle{mytheorem}{}{}{\itshape}{}{\bfseries}{}{\newline}{\thmname{#1}\thmnumber{ #2.}\thmnote{ \textup{(#3)}}}
\theoremstyle{mytheorem}
\newtheorem{theorem}{Theorem}[section]
\newtheorem{definition}[theorem]{Definition}
\newtheorem{lemma}[theorem]{Lemma}
\newtheorem{corollary}[theorem]{Corollary}
\newtheorem{proposition}[theorem]{Proposition}
\newtheorem{example}[theorem]{Example}
\newtheorem{remark}[theorem]{Remark}
\newtheorem{problem}[theorem]{Problem}
\newtheorem{algo}[theorem]{Algorithm}
\newcommand{\A}{\mathbb A} 
\newcommand{\N}{\mathbb N} 
\newcommand{\Z}{\mathbb Z} 
\newcommand{\K}{\mathbb K} 
\newcommand{\ie}{{\it i.e.}}
\newcommand{\E}{\mathbb E} 
\newcommand{\AODE}{{AO{$\Delta$}E}}
\newcommand{\AODEs}{{AO{$\Delta$}Es}}
\providecommand{\keywords}[1]
{
  \small	
  \textbf{\textit{Keywords---}} #1
}
\DeclareMathOperator{\Res}{Res}
\DeclareOldFontCommand{\rm}{\normalfont\rmfamily}{\mathrm}
\DeclareOldFontCommand{\sf}{\normalfont\sffamily}{\mathsf}
\DeclareOldFontCommand{\tt}{\normalfont\ttfamily}{\mathtt}
\DeclareOldFontCommand{\bf}{\normalfont\bfseries}{\mathbf}
\DeclareOldFontCommand{\it}{\normalfont\itshape}{\mathit}
\DeclareOldFontCommand{\sl}{\normalfont\slshape}{\@nomath\sl}
\DeclareOldFontCommand{\sc}{\normalfont\scshape}{\@nomath\sc}
\title{Rational Solutions of First-Order Algebraic Ordinary Difference Equations}
\author{
Thieu N. Vo \thanks{Fractional Calculus, Optimization and Algebra Research Group, Faculty of Mathematics and Statistics, 
Ton Duc Thang University, Ho Chi Minh City, Vietnam. Email: vongocthieu@tdtu.edu.vn}, 
Yi Zhang \thanks{Department of Mathematical Sciences, The University of Texas at Dallas (UTD), USA. 
Supported by the UT Dallas Program: P-1-03246. Email: zhangy@amss.ac.cn}}
\begin{document}

\maketitle

\begin{abstract}
We propose an algebraic geometric approach for studying rational solutions of first-order algebraic ordinary difference equations (\AODEs). 
For an autonomous first-order \AODE, we give an upper bound for the degrees of its rational solutions, and thus derive a complete algorithm for 
computing corresponding rational solutions.
\end{abstract}

\keywords{algebraic ordinary difference equations; strong rational general solutions; parametrization; separable difference equation; resultant theory; algorithms}

\section{Introduction}

An algebraic ordinary difference equation (\AODE) is a difference equation of the form
\[
F(x, y(x), y(x + 1), \cdots, y(x + m))=0,
\]
where $F$ is a nonzero polynomial in $y(x), y(x + 1), \cdots, y(x + m)$ with coefficients in the field $\K(x)$ of rational functions over an algebraically closed field $\K$ of characteristic zero, and~$m \in \N$.
We say that an AO{$\Delta$}E is \emph{autonomous}  if the independent variable $x$ does not appear in it explicitly.
For computational purpose, we may choose $\K=\bar{\mathbb{Q}}$, the field of algebraic numbers.
{\AODEs} naturally appear from various problems, such as symbolic summation~\cite{PWZbook1996, KoutschanThesis}, 
factorization of linear difference operators~\cite{BronsteinPetkovsek1996}, 
analysis of time or space complexity of computer programs with recursive calls~\cite{Eekelen2018}. 
Thus, to determine (closed form) solutions of a given {\AODE}  is a fundamental problem in difference algebra and is of general interest.

Constructive approaches for finding symbolic solutions  of linear difference equations and their applications have been extensively investigated. 
There are well-known algorithms for computing polynomial~\cite{AbramovBronstein1995}, rational~\cite{Abramov1989, Abramov1995, Abramov1998, vanHoeij1998}  and 
hypergeometric~\cite{AbramovPaule1998, Koepf1999, Gosper1978, Koepf1995, Paule1995, Paule1995b, Paule1997,  Petkovsek1992, vanHoeij1999} solutions for linear difference equations. Besides, Karr~\cite{Karr1981, Karr1985}, Kauers and Schneider~\cite{Kauers2006} developed algorithms for determining closed form formulas for finite sums and indefinite nested summation. 
The corresponding implementations are available in computer algebra systems {\tt Maple} and {\tt Mathematica}.  
For more general types of solutions of linear difference equations, we refer to~\cite{HendriksSinger1999, Wolfram2000}. 
Last but not least, Bronstein~\cite{Bronstein2000} and Schneider~\cite{Schneider2005} gave algorithms for calculating solutions 
of parameterized linear difference equations within~$\Pi \Sigma$-fields.

Nevertheless, there are only a few steps in attacking the problem of computing symbolic solutions of (nonlinear) \AODE.
Cohn~\cite{Cohn1965} provided a general algebraic framework for investigating structures of {\AODEs} and their solutions.
Elaydi~\cite{Elaydi2004} summarized some useful techniques for transforming certain nonlinear {\AODE} (such as difference equations of general Riccati type) into linear ones. 
In~\cite{FengGao2008}, the authors gave a polynomial time algorithm for finding polynomial solutions of first-order autonomous {\AODEs}
by utilizing parametrization theory of plane algebraic curves. Using symmetric polynomial theory, 
Shkaravska and Eekelen~\cite{Eekelen2014, Eekelen2018} gave a degree bound for polynomial solutions of high-order non-autonomous {\AODEs}
under a sufficient condition.  

We are mainly interested in rational solutions of first-order \AODEs.
In~\cite{FengGao2008}, Feng, Gao and Huang proposed an algorithm for computing a rational solution for a first-order autonomous {\AODE} provided that a bound for the degree of the rational solution is given.
They also pointed out that they could not bound the degrees of rational solutions through the parametrization technique because the difference version of~\cite[Theorem 3.7]{FengGao06} is not always true (see \cite[Example~4.1]{FengGao2008}). 
We overcome this missing part and present an algorithm for computing such a degree bound.
It is seen that if $y(x) \in \mathbb{K}(x)$ is a nonzero rational solution of an autonomous first-order \AODE, then $y(x+c) \in \mathbb{K}(x,c) \setminus \mathbb{K}(x)$, where $c$ is a constant in a difference extension of $\mathbb{K}$, is again a solution of the given difference equation.
In this paper, we consider a more general problem:

\begin{problem}\label{prob:main}
Let $F \in \mathbb{K}[x,y,z]$ be an irreducible polynomial. Determine a solution~$s \in \mathbb{K}(x,c) \setminus \K(x)$, where $c$ is a transcendental constant, for the following difference equation
\begin{equation} \label{EQ:first-orderADE}
F(x,y(x),y(x+1))=0.
\end{equation}
\end{problem}

A solution in $\K(x,c) \setminus \K(x)$ is called a \emph{strong rational general solution} (compare with \cite[Definition~2.2]{FengGao2008}).
We prove (Theorem~\ref{thm:StrongSolution}) that if the difference equation~\eqref{EQ:first-orderADE} admits a strong rational general solution then its corresponding algebraic curve in $\mathbb{A}^2(\overline{\K(x)})$ defined by $F(x,y,z)=0$ is of genus zero. 
Thus, we can take use of parametric representations of rational curves to transform the original difference equation into an associated difference equation which is of simpler form (see Theorem~\ref{thm:StrongSolution} and Proposition~\ref{prop:birational_transformation}).
The latter difference equation has a special form and it  is called a \emph{separable difference equation} (Remark~\ref{REMARK:separable}). 
We prove (Theorem~\ref{thm:OneToOne}) that there is a one-to-one corresponding between the strong rational general solutions of the given difference equation and those of the associated separable difference equation.
Therefore, the problem of determining a strong rational general solution for a first-order {\AODE} is reduced to that of computing a strong rational general solution of the corresponding separable one. 

For an autonomous first-order \AODE, we give a bound for the degrees of rational solutions of its associated separable difference equation.
Thus, we derive a complete algorithm for computing rational solutions of autonomous first-order {\AODEs} (see Algorithm~\ref{ALGO:firstautonomous}).
To derive a degree bound, we first transform the problem of determining a rational solution of an autonomous separable {\AODE} to that of computing a pair of polynomial solutions of an autonomous first-order system (Subsection~\ref{SUBSEC:reduction}).
Secondly, we use tools from resultant theory to eliminate one dependent variable from the system (Subsection~\ref{SUBSEC:elimination}), and then 
obtain a nontrivial autonomous homogeneous second-order {\AODE} for the other dependent variable.
By using the difference analog of the combiratorial approach in \cite{VoZhang18, Vo2018Computation}, 
we finally present a degree bound for polynomial solutions of an autonomous homogeneous second-order {\AODE} (Subsection~\ref{SUBSEC:polysol}), which is the last key for deriving the complete algorithm.

The rest of the paper is organized as follows. Section~\ref{SECT:transform} is devoted to present an algebraic geometric approach to first-order {\AODEs}. We propose a constructive approach in Section~\ref{SECT:ratsol} and a complete algorithm in Section~\ref{SECT:algo} for computing rational solutions of autonomous 
first-order \AODEs.

\section{An algebraic geometric approach to first-order \AODEs} \label{SECT:transform}

In this section, we study first-order {\AODEs} from an algebraic geometric point of view. 
The idea is inherited from \cite{FengGao,FengGao06,ArocaCanoFengGao,Vo2018Deciding}.
Assume that $F \in \K[x, y, z]$ is a nonzero trivariate polynomial.
We associate the difference equation $F(x,y(x),y(x+1))=0$ with the corresponding algebraic curve in the two dimensional affine plane over the field $\overline{\mathbb{K}(x)}$ of algebraic functions defined by $F(x,y,z)=0$.
We prove that if the given difference equation admits a strong rational general solution then the corresponding algebraic curve is of genus zero.
Therefore we may apply algebraic tools from the theory of rational curves.
In particular, we use rational parametric representations of the algebraic curve to transform the original difference equation 
to a simpler one (see Remark~\ref{REMARK:separable}).
A one-to-one correspondence between strong rational general solutions of the original difference equation and those of the new one is established.

We start this section with a formal definition of strong rational general solution.

\begin{definition}[{See \cite[Definition~3.3]{Vo2018Deciding}}]
Let $F \in \K[x, y, z]$ be a nonzero trivariate polynomial. 
A solution $s$ of the algebraic difference equation $F(x,y(x),y(x+1))=0$ is 
called a \emph{strong rational general solution} if $s=s(x,c) \in \mathbb{K}(x,c) \setminus \mathbb{K}(x)$ 
for some constant $c$ which is transcendental over $\mathbb{K}(x)$.
\end{definition}

The following theorem gives a necessary condition for a first-order algebraic difference equation 
having a strong rational general solution. 


\begin{theorem} \label{thm:StrongSolution}
Let $F$ be an irreducible polynomial in $\mathbb{K}[x,y,z] \setminus \mathbb{K}[x,y]$ and 
consider the difference equation $F(x,y(x),y(x+1))=0$.
If the difference equation admits a strong rational general solution, then 
\begin{enumerate}\renewcommand{\theenumi}{(\roman{enumi})}\renewcommand{\labelenumi}{\theenumi}
	\item\label{it:StrongSolution:irred} $F$ is irreducible as a polynomial in $\overline{\K(x)}[y,z]$, and
	\item\label{it:StrongSolution:genus} the algebraic curve in $\A^2 \left(\overline{\K(x)}\right)$ defined by $F(x,y,z)=0$ 
	is of genus zero. 
\end{enumerate}
\end{theorem}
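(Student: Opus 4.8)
The plan is to analyse, over the ground field $\K(x)$ rather than over $\overline{\K(x)}$, the subfield generated by the two ``values'' of a strong rational general solution, and then to invoke L\"uroth's theorem; both assertions will fall out of the resulting rationality. Concretely, let $s=s(x,c)\in\K(x,c)\setminus\K(x)$ be a strong rational general solution and set $\alpha:=s(x,c)$ and $\beta:=s(x+1,c)$; these are well-defined elements of $\K(x,c)$ satisfying $F(x,\alpha,\beta)=0$. Because $s\notin\K(x)$ and $\K(x)$ is relatively algebraically closed in the rational function field $\K(x)(c)$, the element $\alpha$ is transcendental over $\K(x)$. Moreover, $F$ is irreducible in $\K[x,y,z]$ and has positive degree in $z$ (as $F\notin\K[x,y]$), so it is primitive as a polynomial in $y,z$ over $\K[x]$ and hence, by Gauss's lemma, irreducible in $\K(x)[y,z]$.

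The first step is to identify $\K(x)(\alpha,\beta)$ with the function field of the curve. Consider the evaluation homomorphism $\varphi\colon\K(x)[y,z]\to\K(x)(\alpha,\beta)$ sending $y\mapsto\alpha$, $z\mapsto\beta$. Its kernel $\mathfrak p$ is prime, contains $F$, and is not maximal, since $\alpha$ is transcendental over $\K(x)$ and hence the quotient is not a field. In the two-dimensional factorial ring $\K(x)[y,z]$ every nonzero non-maximal prime is principal and generated by an irreducible polynomial, so $\mathfrak p=(g)$ with $g\mid F$ irreducible; irreducibility of $F$ over $\K(x)$ forces $\mathfrak p=(F)$. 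Consequently $\K(x)(\alpha,\beta)$ is $\K(x)$-isomorphic to $\Sigma:=\operatorname{Frac}\bigl(\K(x)[y,z]/(F)\bigr)$, the function field over $\K(x)$ of the curve $C\colon F(x,y,z)=0$.

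The second step applies L\"uroth's theorem. Since $\alpha,\beta\in\K(x)(c)$ with $c$ transcendental over $\K(x)$, we have $\K(x)\subsetneq\K(x)(\alpha,\beta)\subseteq\K(x)(c)$, the left inclusion being strict because $\alpha\notin\K(x)$; therefore $\Sigma\cong\K(x)(\alpha,\beta)=\K(x)(u)$ for some $u$, a purely transcendental extension of degree one. It follows at once that the constant field of $\Sigma$ (the relative algebraic closure of $\K(x)$ in it) is $\K(x)$, so $C$ is geometrically irreducible over the perfect field $\K(x)$; since $\operatorname{char}\K=0$ this means precisely that $F$ stays irreducible in $\overline{\K(x)}[y,z]$, proving~\ref{it:StrongSolution:irred}. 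Being geometrically irreducible, $C$ then has, over $\overline{\K(x)}$, function field $\overline{\K(x)}\otimes_{\K(x)}\K(x)(u)=\overline{\K(x)}(u)$, which is again purely transcendental of transcendence degree one, so $C$ has genus zero, proving~\ref{it:StrongSolution:genus}.

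As for the main obstacle: the proof is short once one commits to working over $\K(x)$ instead of over $\overline{\K(x)}$ --- this is really the only idea, because over $\K(x)$ the polynomial $F$ is already known to be irreducible and L\"uroth applies directly, whereas a parametrization attempted over $\overline{\K(x)}$ would first have to exclude that $F$ becomes reducible there, which is the very conclusion sought. The delicate point is the claim that $\alpha$ is genuinely transcendental over $\K(x)$: this is exactly where the hypothesis $s\notin\K(x)$ (that the solution depends nontrivially on the transcendental constant $c$) is indispensable, since otherwise $\mathfrak p$ could be a maximal ideal and the identification of $\K(x)(\alpha,\beta)$ with the function field of $C$ --- the engine of the whole argument --- would collapse.
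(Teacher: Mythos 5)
Your proof is correct, and it takes a genuinely different route from the paper's. Both arguments begin the same way, by observing that the kernel of the evaluation map $y\mapsto s(x,c)$, $z\mapsto s(x+1,c)$ is a nonzero, non-maximal prime in a two-dimensional UFD, hence a principal height-one prime. But from there the strategies diverge. The paper works over $\overline{\K(x)}$ from the outset, so it must then \emph{descend}: it runs an implicitization/Gr\"obner-basis argument to produce a generator of the kernel with coefficients in $\K(x)$, matches that generator against $F$ (irreducible over $\K(x)$), and only then concludes irreducibility over $\overline{\K(x)}$; for the genus it cites the parametrization $(s(x,c),s(x+1,c))$ together with the rationality criteria of Sendra--Winkler--P\'erez-D\'iaz. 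You instead work over $\K(x)$, where irreducibility of $F$ is already available via Gauss's lemma, identify $\K(x)(\alpha,\beta)$ with the function field of the curve, apply L\"uroth inside $\K(x)(c)$ to see that this function field is purely transcendental, and then \emph{ascend}: purely transcendental implies the constant field is $\K(x)$, hence geometric irreducibility (in characteristic zero), hence irreducibility of $F$ over $\overline{\K(x)}$, and genus zero falls out because the geometric function field is $\overline{\K(x)}(u)$. Your version is more conceptual and avoids the Gr\"obner-basis machinery entirely, at the cost of invoking the standard equivalence between geometric integrality and the base field being relatively algebraically closed in the function field; the paper's version is more computational but self-contained modulo the cited parametrization theorems (which themselves rest on L\"uroth, so the two proofs of part (ii) are ultimately close relatives). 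Two small points worth making explicit if you write this up: the claim that $\K(x)[\alpha,\beta]$ is not a field uses Zariski's lemma, and the equality $\overline{\K(x)}\otimes_{\K(x)}\K(x)(u)=\overline{\K(x)}(u)$ should really be stated as an identification of fraction fields, since the tensor product itself is only a domain with that fraction field.
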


\begin{proof}
\ref{it:StrongSolution:irred} Let $s(x, c)$ be a strong rational solution of the difference equation $F = 0$.
Consider the ring homomorphism:
\[
 \begin{array}{llll}
  \phi: & \overline{\K(x)}[y,z] & \longrightarrow & \overline{\K(x)}(c) \\
        & G(x,y,z)              & \longmapsto     & G(x,s(x,c),s(x+1,c))
 \end{array}
\]
Let $I$ be the kernel of $\phi$. By assumption, we know that $F \in I$. 
Thus, $I$ is nonzero.
To prove that $F$ is irreducible over $\overline{\K(x)}$, 
we show that $I$ is a principal prime ideal, and that $F$ generates $I$.

Since the zero set of $I$ 
contains a parametric class of 
points $(s(x,c),s(x+1,c))$, the ideal~$I$ is neither a maximal ideal nor the whole ring. 
On the other hand, the homomorphism $\phi$ induces an isomorphism between $\overline{\K(x)}[y,z] / I$ 
and a subring of $\overline{\K(x)}(c)$, which is an integral domain. 
Therefore, the ideal $I$ is a prime ideal.  
Since the Krull dimension of $\overline{\K(x)}[y,z]$ is 2, it follows that $I$ is of height 1. 
By \cite[prop.1.12A, p.7]{Hartshorne}, we conclude that $I$ is principal.

Next, using the technique of Gr\" obner basis, 
we construct a generator of $I$ with coefficients in $\K(x)$. 
In order to do that, we rewrite $I$ as the following form:
$$I=\left \{ H \in \overline{\K(x)}[y,z] \,|\, H(x,s(x,c),s(x+1,c))=0 \right \}.$$
Let $s(x,c)=\frac{P(x,c)}{Q(x,c)}$ and $s(x+1,c)=\frac{R(x,c)}{S(x,c)}$ 
where $P,Q,R,S \in \K[x,c]$ such that $\gcd(P,Q)=\gcd(R,S)=1$. 
By the technique of implicitization \cite[Thm.2, p.138]{Cox2015}, we know that 
$$I=\langle yQ-P,zS-R,1-QSt \rangle \cap \overline{\K(x)}[y,z],$$
where the first component $J$ in the right hand side is an ideal in $\overline{\K(x)}[c,t,y,z]$ 
generated by polynomials $yQ-P,zS-R$ and $1-QSt$. 
Let $\succ$ be the lexicographic order for monomials of $\overline{\K(x)}[c,t,y,z]$ such that $c \succ t \succ y \succ z$. 
Using Buchberger's algorithm, one can determine a reduced Gr\" obner basis $\mathbf{G}$ for $J$ with respect to $\succ$. 
Then $\mathbf{G}$ only contains polynomials in $c,t,y,z$ with coefficients in $\K(x)$. 
After discarding all polynomials involving $c,t$ from $\mathbf{G}$, 
we obtain a reduced Gr\" obner basis $\tilde{\mathbf{G}}$ for $I$ which contains only polynomials in $\K(x)[y,z]$. 
Since $I$ is principal and prime, the basis $\tilde{\mathbf{G}}$ contains only one element, say $G \in \K(x)[y,z]$, 
and $G$ is irreducible over $\overline{\K(x)}$. 

Recall that $F$ is irreducible over $\K(x)$ and $F \in I=\left<G\right>$. 
This implies that $F$ differs from $G$ by a multiplication of a nonzero element in $\K(x)$. 
Hence, $F$ is also irreducible over $\overline{\K(x)}$. 

\ref{it:StrongSolution:genus} As a consequence, the algebraic equation $F(x,y,z)=0$ defines an irreducible algebraic curve 
in the affine plane $\mathbb{A}^2(\overline{\K(x)})$. 
Moreover, this curve can be parametrized by the pair of rational functions $(s(x,c),s(x+1,c))$. 
Hence, by \cite[Theorem 4.7, p.93]{Sendra2008}, the curve is rational. 
We conclude from~\cite[Theorem 4.11, p.95]{Sendra2008} that its genus is zero. 
\end{proof}




The above theorem motivates the following concept.

\begin{definition} \label{DEF:algcurve}
Let $F$ be a nonzero polynomial in $\mathbb{K}[x,y,z]$.
The algebraic curve $\mathcal{C}_F \subset \A^2(\overline{\K(x)})$ defined by $F(x,y,z)=0$
is called the \emph{corresponding algebraic curve} of the first-order algebraic difference equation $F(x,y(x),y(x+1))=0$.
\end{definition}

Due to Theorem~\ref{thm:StrongSolution}, if a first-order {\AODE} admits a strong rational general solution, 
then its corresponding algebraic curve must be of genus zero. 
Therefore, we may apply algebraic tools from parametrization theory of rational curves.

Given a field $\mathbb{L}$ of characteristics zero and a non-constant polynomial $G \in \mathbb{L}[y,z]$, 
the algebraic equation $G(y,z)=0$ implicitly defines an algebraic curve, say $\mathcal{C}_G$, 
in the affine plane $\mathbb{A}^2(\overline{\mathbb{L}})$ over the algebraic closure $\overline{\mathbb{L}}$ of $\mathbb{L}$.
It is well-known that $\mathcal{C}_G$ is of genus zero if and only if there exists a birational transformation
$$
\begin{array}{cccl}
\mathcal{P}: &\mathbb{A}^1 (\overline{\mathbb{L}})& \to & \mathcal{C}_G \subset \mathbb{A}^2 (\overline{\mathbb{L}})\\
& t & \mapsto & (p_1(t),p_2(t))
\end{array}
$$
for some rational functions $p_1,p_2 \in \overline{\mathbb{L}}(t)$.
To construct such a birational transformation is one of the most important problems in parametrization theory of algebraic curves.
For details about parametrization theory of algebraic curves, we refer to \cite{Sendra2008}. 
In particular, there is an algorithm for determining a birational transformation from the affine line to a genus zero algebraic curve over 
the field $\mathbb{L}=\mathbb{K}(x)$ (see \cite{Hoeij2006Solving,Vo2018Deciding}).

The following proposition is a direct consequence of \cite[Proposition~4.3]{Vo2018Deciding}.

\begin{proposition}\label{prop:birational_transformation}
Let $F \in \mathbb{K}(x)[y,z]$ be a non-constant polynomial such that 
the corresponding algebraic curve $\mathcal{C}_F \subset \mathbb{A}^2 (\overline{\mathbb{K}(x)})$ defined by $F(x,y,z)=0$ is of genus zero.
Then there exists a birational transformation $\mathcal{P}:\mathbb{A}^1 (\overline{\mathbb{K}(x)}) \to \mathcal{C}_F$ defined by $\mathcal{P}(x,t)=(p_1(x,t),p_2(x,t))$ for some rational functions $p_1(x,t),p_2(x,t) \in \mathbb{K}(x,t)$.
\end{proposition}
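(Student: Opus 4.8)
Beyond simply quoting \cite[Proposition~4.3]{Vo2018Deciding}, here is the approach I would take for a self‑contained argument. Set $\mathbb{L}:=\mathbb{K}(x)$ and regard $\mathcal{C}_F$ as a curve defined over $\mathbb{L}$, its defining polynomial $F(x,y,z)$ having coefficients in $\mathbb{L}$; since the genus is an absolute invariant I may assume, as in Theorem~\ref{thm:StrongSolution}, that $F$ is irreducible over $\overline{\mathbb{L}}$. The real issue is to \emph{descend} a rational parametrization of $\mathcal{C}_F$ from $\overline{\mathbb{L}}$ down to $\mathbb{L}$, and by classical curve theory this will follow once I produce a single $\mathbb{L}$‑rational point on the smooth projective model of $\mathcal{C}_F$.

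First I would replace $\mathcal{C}_F$ by its smooth projective model $\widetilde{\mathcal{C}}$, a smooth projective genus‑zero curve over $\mathbb{L}$, and use the anticanonical embedding to present $\widetilde{\mathcal{C}}$ as a plane conic $\{\,Q(u_0,u_1,u_2)=0\,\}\subset\mathbb{P}^2_{\mathbb{L}}$ for a nondegenerate ternary quadratic form $Q$ over $\mathbb{L}$ (the anticanonical class is defined over $\mathbb{L}$ and has degree $2$). Recall that such a $\widetilde{\mathcal{C}}$ is $\mathbb{L}$‑isomorphic to $\mathbb{P}^1_{\mathbb{L}}$ if and only if $\widetilde{\mathcal{C}}(\mathbb{L})\neq\varnothing$, so everything reduces to finding a nontrivial zero of $Q$ over $\mathbb{L}$.

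Next I would invoke Tsen's theorem: because $\mathbb{K}$ is algebraically closed of characteristic zero, $\mathbb{L}=\mathbb{K}(x)$ is a $C_1$ field, and the form $Q$ — of degree $2$ in $3>2$ variables — must therefore have a nontrivial zero over $\mathbb{L}$. This gives an $\mathbb{L}$‑rational point $P_0\in\widetilde{\mathcal{C}}(\mathbb{L})$. Projecting $\widetilde{\mathcal{C}}$ from $P_0$ (stereographic projection, i.e.\ the pencil of lines through $P_0$) yields a birational map $\mathbb{P}^1_{\mathbb{L}}\to\widetilde{\mathcal{C}}$ with coordinate functions in $\mathbb{L}(t)$; composing with the birational map $\widetilde{\mathcal{C}}\dashrightarrow\mathcal{C}_F$ and dehomogenizing produces $\mathcal{P}(x,t)=(p_1(x,t),p_2(x,t))$ with $p_1,p_2\in\mathbb{L}(t)=\mathbb{K}(x,t)$, and the composite is birational because every map in the chain is. Constructively, the algorithms of van Hoeij and of Vo cited just above perform precisely this descent‑and‑projection, so they may be used verbatim.

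The hard part is exactly the existence of the $\mathbb{L}$‑rational point: over an arbitrary base field a genus‑zero curve need not admit a parametrization over that field, so some input is essential, and here it is the $C_1$‑property of $\mathbb{K}(x)$. Everything else — reduction to a conic, stereographic projection, and birationality of the composite — is routine. The one bookkeeping point I would be careful about is that $\mathcal{C}_F$ is a priori only given as a subset of $\mathbb{A}^2(\overline{\mathbb{L}})$, so I should make sure that ``genus zero'' is being applied to the absolutely irreducible curve cut out by $F$, which is exactly the situation guaranteed by Theorem~\ref{thm:StrongSolution} in all intended applications of this proposition.
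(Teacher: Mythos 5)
Your proposal is correct, but it takes a more self-contained route than the paper does. The paper's own proof of this proposition is a single line: it appeals to the existence of an \emph{optimal parametrization} of $\mathcal{C}_F$ and delegates everything to \cite[Proposition~4.3 and Section~4]{Vo2018Deciding} (with the actual conic-solving done by the van Hoeij--Cremona algorithm cited nearby). What you have written is essentially the mathematics hiding behind that citation: reduction of the smooth projective genus-zero model to a plane conic via the anticanonical embedding (degree $-(2g-2)=2$, $\ell(-K)=3$), existence of a rational point over $\mathbb{L}=\mathbb{K}(x)$ by Tsen's theorem since $\mathbb{L}$ is $C_1$, and stereographic projection to get a birational map $\mathbb{P}^1_{\mathbb{L}}\to\widetilde{\mathcal{C}}$ with coordinates in $\mathbb{K}(x,t)$. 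You correctly identify the one genuinely nontrivial input --- the $\mathbb{L}$-rational point, which fails over general base fields --- which the paper leaves entirely implicit, and you correctly flag that ``genus zero'' presupposes absolute irreducibility of $F$ over $\overline{\mathbb{K}(x)}$, which is exactly what Theorem~\ref{thm:StrongSolution}\ref{it:StrongSolution:irred} supplies in all uses of the proposition. The trade-off: the paper's citation buys brevity and, importantly for Algorithm~\ref{ALGO:firstautonomous}, the \emph{optimality} of the parametrization (the degree relations $\deg_t p_1=\deg_z F$, $\deg_t p_2=\deg_y F$ invoked in Proposition~\ref{prop:ReducedEquation} come from that optimal choice, not from a bare existence argument); your version buys transparency about why the descent to $\mathbb{K}(x)$ is possible at all. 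If you wanted your argument to fully substitute for the paper's, you would need to add a word on why the projection from a point of the conic yields a \emph{proper} (degree-optimal) parametrization.
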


\begin{proof}
Choose the birational transformation $\mathcal{P}$ to be an optimal 
parametrization of $\mathcal{C}_F$ (see \cite[Section~4]{Vo2018Deciding}).
\end{proof}



Note that there exists an algorithm~\cite[Algorithm~1]{Vo2018Deciding} for determining a birational transformation 
(or more precisely, an optimal parametrization) for $\mathcal{C}_F$ in the above proposition. 
Assume that $r \in \K[x]$. The degree of $r$  is  defined to be the maximum of the degree of its numerator and that of its denominator with respect to $x$, 
and we denote it by $\deg_x(r)$ or simply $\deg(r)$.

\begin{theorem}\label{thm:OneToOne}
Let $F(x,y(x),y(x+1))=0$ be an {\AODE} such that its corresponding curve $\mathcal{C}_F$ is of genus zero. 
Assume that $\mathcal{P}(x,t) = (p_1(x,t),p_2(x,t)) \in \K(x,t)^2$ is a birational transformation from 
the affine line $\mathbb{A}^1(\overline{\mathbb{K}(x)})$ to $\mathcal{C}_F$.
Consider the following difference equation
\begin{equation} \label{equ:AssociatedEquation}
p_1(x+1, \omega(x+1)) = p_2 (x,\omega(x)).
\end{equation}
\begin{enumerate}
	\item If $s(x,c) \in \mathbb{K}(x,c) \setminus \mathbb{K}(x)$ is a strong rational general solution of 
	the given difference equation $F(x,y(x),y(x+1))=0$, 
	then there exists a strong rational general solution $\omega(x,c) \in \mathbb{K}(x,c) \setminus \mathbb{K}(x)$ 
	of~\eqref{equ:AssociatedEquation} such that $s(x,c)=p_1(x,\omega(x,c))$.
	\item Conversely, if $\omega(x,c) \in \mathbb{K}(x,c) \setminus \mathbb{K}(x)$ is a strong rational general solution 
	of~\eqref{equ:AssociatedEquation}, then 
	\[s(x,c) = p_1(x,\omega(x,c)) \in \mathbb{K}(x,c) \setminus \mathbb{K}(x)\]
	is a strong rational general solution of $F(x, y(x), y(x + 1)) = 0$.
\end{enumerate}
\end{theorem}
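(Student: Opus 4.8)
The plan is to transport solutions back and forth along the birational transformation $\mathcal{P}$ and a rational inverse of it, the only delicate point being that rational maps are defined only on dense open subsets. Since $\mathcal{P}$ is a birational transformation defined over $\mathbb{K}(x)$, it admits a rational inverse $\mathcal{P}^{-1}$ also defined over $\mathbb{K}(x)$; write it as $(y,z) \mapsto q(x,y,z)$ with $q \in \mathbb{K}(x,y,z)$, so that $q(x,p_1(x,t),p_2(x,t)) = t$ holds on a dense open subset of $\mathbb{A}^1$ and $\mathcal{P}\bigl(x,q(x,y,z)\bigr) = (y,z)$ holds on a dense open subset of $\mathcal{C}_F$. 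Throughout we use that $c$ is transcendental over $\mathbb{K}(x)$, so $\mathbb{K}(x)$ is algebraically closed in $\mathbb{K}(x,c)$; consequently every element of $\mathbb{K}(x,c) \setminus \mathbb{K}(x)$ is transcendental over $\mathbb{K}(x)$, and likewise after the shift $x \mapsto x+1$, which is an automorphism of $\mathbb{K}(x,c)$ carrying $\mathbb{K}(x)$ onto itself.

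For part (1), let $s(x,c)$ be a strong rational general solution of $F(x,y(x),y(x+1)) = 0$. Then $(s(x,c),s(x+1,c))$ is a point of $\mathcal{C}_F$; since $s \notin \mathbb{K}(x)$ its first coordinate is non-constant in $c$, so the rational map $\mathbb{A}^1 \dashrightarrow \mathcal{C}_F$ given by $c \mapsto (s(x,c),s(x+1,c))$ is dominant and $(s(x,c),s(x+1,c))$ is a generic point of the irreducible curve $\mathcal{C}_F$. In particular it lies in the cofinite locus of $\mathcal{C}_F$ on which $\mathcal{P}^{-1}$ is defined and $\mathcal{P} \circ \mathcal{P}^{-1}$ is the identity, so $\omega(x,c) := q\bigl(x,s(x,c),s(x+1,c)\bigr)$ is a well-defined element of $\mathbb{K}(x,c)$ and $\mathcal{P}(x,\omega(x,c)) = (s(x,c),s(x+1,c))$, that is, $p_1(x,\omega(x,c)) = s(x,c)$ and $p_2(x,\omega(x,c)) = s(x+1,c)$. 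Applying the shift $x \mapsto x+1$ to the first of these gives $p_1(x+1,\omega(x+1,c)) = s(x+1,c) = p_2(x,\omega(x,c))$, so $\omega$ solves~\eqref{equ:AssociatedEquation}; and $\omega \notin \mathbb{K}(x)$, for otherwise $s = p_1(x,\omega) \in \mathbb{K}(x)$, a contradiction.

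For part (2), let $\omega(x,c)$ be a strong rational general solution of~\eqref{equ:AssociatedEquation}. As $\omega(x,c)$ and $\omega(x+1,c)$ are transcendental over $\mathbb{K}(x)$, we may substitute $t = \omega(x,c)$ into the identity $F(x,p_1(x,t),p_2(x,t)) = 0$ (which holds because $\mathcal{P}$ parametrizes $\mathcal{C}_F$) to get $F\bigl(x,s(x,c),p_2(x,\omega(x,c))\bigr) = 0$, where $s(x,c) := p_1(x,\omega(x,c)) \in \mathbb{K}(x,c)$. On the other hand $s(x+1,c) = p_1(x+1,\omega(x+1,c))$, which by~\eqref{equ:AssociatedEquation} equals $p_2(x,\omega(x,c))$; hence $F(x,s(x,c),s(x+1,c)) = 0$ and $s$ is a solution of $F(x,y(x),y(x+1)) = 0$. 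Since $\omega(x,c)$ is transcendental over $\mathbb{K}(x)$, it lies in the cofinite locus of $\mathbb{A}^1$ on which $\mathcal{P}^{-1} \circ \mathcal{P}$ is the identity, so $q\bigl(x,s(x,c),s(x+1,c)\bigr) = \omega(x,c)$; were $s \in \mathbb{K}(x)$, this would force $\omega \in \mathbb{K}(x)$, a contradiction, so $s \in \mathbb{K}(x,c) \setminus \mathbb{K}(x)$. The assignments $s(x,c) \mapsto q\bigl(x,s(x,c),s(x+1,c)\bigr)$ and $\omega(x,c) \mapsto p_1(x,\omega(x,c))$ are mutually inverse, so the correspondence is one-to-one.

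The main obstacle, and essentially the only nontrivial point, is the bookkeeping about domains of definition: one must check that $(s(x,c),s(x+1,c))$, respectively $\omega(x,c)$, lies in the dense open set on which $\mathcal{P}$ and $\mathcal{P}^{-1}$ are mutually inverse. This is exactly what the hypotheses $s \notin \mathbb{K}(x)$ and $\omega \notin \mathbb{K}(x)$ supply, through the facts that a non-constant one-parameter family on an irreducible curve is dominant and that $\mathbb{K}(x)$ is algebraically closed in $\mathbb{K}(x,c)$, so that rational identities in one variable can be specialized at such transcendental arguments. Granting this, both statements follow from a single substitution together with one application of the shift operator.
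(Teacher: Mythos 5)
Your proof is correct and follows essentially the same route as the paper's: transport the solution through the parametrization $\mathcal{P}$ and its rational inverse, apply the shift operator $x \mapsto x+1$, and use the transcendence of $c$ over $\mathbb{K}(x)$ to justify the substitutions. The only cosmetic differences are that in part (1) the paper invokes a reparametrization lemma from Sendra et al.\ where you construct $\omega = \mathcal{P}^{-1}(x, s(x,c), s(x+1,c))$ directly and justify its definedness by genericity, and in part (2) the paper deduces $s \notin \mathbb{K}(x)$ from multiplicativity of degrees under composition of rational functions rather than from the inverse map.
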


\begin{proof}
\begin{enumerate}
	\item Assume that $s(x,c) \in \mathbb{K}(x,c) \setminus \mathbb{K}(x)$ is 
	a strong rational general solution of the given difference equation, i.e., 
	\[F(x,s(x,c),s(x+1,c))=0.\]
	Then $\mathcal{Q}(x,t)=(s(x,t),s(x+1,t))$ is a parametric representation for $\mathcal{C}_F$.
	By \cite[Lemma 4.17, p. 97]{Sendra2008}, if follows that 
	there exists a function $\omega(x,t) \in \overline{\mathbb{K}(x)}(t)$ such that $\mathcal{Q}(x,t) = \mathcal{P}(x,\omega(x,t))$. 
	In particular, we have
	\begin{align*}
	&s(x,t)=p_1(x,\omega(x,t)), \quad \text{and} \quad s(x+1,t)=p_2(x,\omega(x,t)).
	\end{align*}
	Therefore, we have $p_1(x+1,\omega(x+1,t)) = p_2(x,\omega(x,t))$. 
	Thus, it follows that~$\omega(x,c)$ is a solution of the difference equation \eqref{equ:AssociatedEquation}, 
	and $s(x,c)=p_1(x,\omega(x,c))$.
	Due to the proof of \cite[Lemma 4.17, p.19]{Sendra2008}, we can choose $\omega(x,t)=\mathcal{P}^{-1} \circ \mathcal{Q}(x,t)$.
	Hence, we conclude that $\omega(x,c) \in \mathbb{K}(x,c)$. 
	Since $\deg_c(s)> 0$ and $s(x,c)=p_1(x,\omega(x,c))$, it follows that $\omega(x, c) \in \mathbb{K}(x, c) \setminus \mathbb{K}(x)$. 
	
	\item Assume that $\omega(x,c) \in \mathbb{K}(x,c) \setminus \mathbb{K}(x)$ is a solution of the difference equation \eqref{equ:AssociatedEquation}, i.e.,
	\[p_1(x+1,\omega(x+1,c)) = p_2(x,\omega(x,c)).\]
	Since $\mathcal{P}$ is a birational transformation of $\mathcal{C}_F$, we have $F(x,p_1(x,t),p_2(x,t))=0$.	
	Substituting $t$ by $\omega(x,c)$, we obtain
	\begin{align*}
	0&=F(x,p_1(x,\omega(x,c)),p_2(x,\omega(x,c)))&\\
	&=F(x,p_1(x,\omega(x,c)),p_1(x+1,\omega(x+1,c))).&
	\end{align*}
	Set $s(x, c) = p_1(x,\omega(x,c))$. 
	It follows from the above equalities that $s(x, c)$ is a rational solution of $F(x,y(x),y(x+1))=0$. 
	Since $\deg_t(p_1) > 0$ and $\deg_c(\omega) > 0$, it follows from~\cite[Propostion 1.2, item 11]{Binder1996} 
	that $\deg_c(s) > 0$. In other word, we have $s(x, c)  \in \mathbb{K}(x,c) \setminus \mathbb{K}(x)$.
\end{enumerate}
\end{proof}

The above theorem motivates the following definition. 

\begin{definition} \label{DEF:assdiffeqn}
Using notations in Theorem~\ref{thm:OneToOne}, we call equation~\eqref{equ:AssociatedEquation} 
the \emph{associated difference equation} of $F(x,y(x),y(x+1))=0$.
\end{definition}

From experiments, we find that the associated difference equation is usually simpler (see Example~\ref{EX:non-autonomous}) 
than the original one.
In next sections, we will present a degree bound for rational solutions of the associated difference equation of 
an autonomous first-order \AODE, and thus derive a complete algorithm 
for determining rational solutions of the original one.
The existence of an upper bound for rational solutions of non-autonomous first-order {\AODE} is still open.

The following proposition is a generalization of \cite[Lemma~4.2]{FengGao2008}, which refines the shapes 
of both the {\AODE} 
with strong rational general solutions and the associated one.

\begin{proposition} \label{prop:ReducedEquation}
Let $F \in \mathbb{K}[x,y,z] \setminus \mathbb{K}[x,y]$ be an irreducible polynomial. 
If the algebraic difference equation $F(x,y(x),y(x+1))=0$ admits a strong rational general solution, 
then we have that $\deg_y F = \deg_{z} F$.
Furthermore, in this case, the associated difference equation exists and it must be of the form $$p_1(x,\omega(x+1)) = p_2(x,\omega(x)),$$ 
for some rational functions $p_1,p_2 \in \mathbb{K}(x,y)$ such that 
$$\deg_{y}p_1 = \deg_{y}p_2=\deg_{z}F = \deg_y F.$$
\end{proposition}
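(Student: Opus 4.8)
The plan is to pit against each other the two rational parametrizations of $\mathcal{C}_F$ that exist under the hypothesis — the one tautologically given by the strong rational general solution, and an optimal (proper) one given by Proposition~\ref{prop:birational_transformation} — comparing the degrees of their components in the parameter.

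First I would record the structural input. Since the equation admits a strong rational general solution $s(x,c)\in\K(x,c)\setminus\K(x)$, Theorem~\ref{thm:StrongSolution} gives that $F$ is irreducible over $\overline{\K(x)}$ and that $\mathcal{C}_F$ has genus zero; note also $\deg_z F\ge 1$ because $F\notin\K[x,y]$. In particular the associated difference equation exists: by Proposition~\ref{prop:birational_transformation} there is a birational (hence proper) parametrization $\mathcal{P}(x,t)=(p_1(x,t),p_2(x,t))$ of $\mathcal{C}_F$ with $p_1,p_2\in\K(x,t)$, and \eqref{equ:AssociatedEquation} is formed from it. On the other hand, as already used in the proof of Theorem~\ref{thm:OneToOne}, $\mathcal{Q}(x,t)=(s(x,t),s(x+1,t))$ is also a rational parametrization of $\mathcal{C}_F$ (a non-constant tuple satisfying the irreducible equation $F=0$; cf.\ \cite[Lemma 4.17]{Sendra2008}).

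The heart of the proof is a degree count. For any rational parametrization $(q_1,q_2)$ of $\mathcal{C}_F$ with $q_1$ non-constant in $t$, composing with the projection of $\mathcal{C}_F$ onto the first coordinate and using multiplicativity of degrees of dominant maps $\mathbb{A}^1\to\mathbb{A}^1$ gives $\deg_t q_1=\deg(\mathrm{param})\cdot\deg_z F$, and symmetrically $\deg_t q_2=\deg(\mathrm{param})\cdot\deg_y F$ when $q_2$ is non-constant in $t$; this is the standard relation between the partial degrees of the implicit equation and the degrees of a parametrization, see \cite[Chapter 4]{Sendra2008}. Now the shift $x\mapsto x+1$ is a $\K$-automorphism of $\K(x)$ fixing $t$, so it preserves the $t$-degree of every element of $\K(x)(t)$; as $s\notin\K(x)$ this yields $0<\deg_t s(x,t)=\deg_t s(x+1,t)$. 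Feeding $\mathcal{Q}$ into the relations above, $\deg(\mathcal{Q})\cdot\deg_z F=\deg_t s(x,t)=\deg_t s(x+1,t)=\deg(\mathcal{Q})\cdot\deg_y F$, and since $\deg(\mathcal{Q})\ge 1$ we get $\deg_y F=\deg_z F$. This forces $\deg_y F=\deg_z F\ge 1$, so $\mathcal{C}_F$ is neither a horizontal nor a vertical line, both $p_1,p_2$ are non-constant in $t$, and applying the same relations to the proper parametrization $\mathcal{P}$ (whose degree is $1$) gives $\deg_t p_1=\deg_z F$ and $\deg_t p_2=\deg_y F$.

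Finally I would read off the statement about the associated equation: by Theorem~\ref{thm:OneToOne} it is \eqref{equ:AssociatedEquation}, namely $p_1(x+1,\omega(x+1))=p_2(x,\omega(x))$, and substituting $x+1$ for $x$ in $p_1$ produces a rational function in $\K(x,t)$ with the same $t$-degree (shift-invariance once more), so after relabelling the parameter $t$ as $y$ the equation takes the asserted form $p_1(x,\omega(x+1))=p_2(x,\omega(x))$ with $p_1,p_2\in\K(x,y)$ and $\deg_y p_1=\deg_y p_2=\deg_z F=\deg_y F$. I expect the main obstacle to be bookkeeping rather than conceptual: pinning down the exact partial-degree formula for parametrizations in the orientation needed, and checking that every parametrization in play has non-constant components so that degree multiplicativity applies; the one genuinely difference-algebraic ingredient is the observation that the shift operator preserves the degree in the parameter.
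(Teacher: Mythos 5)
Your proof is correct, and its skeleton matches the paper's: both arguments rest on the relation $\deg_t p_1=\deg_z F$, $\deg_t p_2=\deg_y F$ for the proper parametrization (the paper cites \cite[Theorem 4.21]{Sendra2008}) together with the observation that the shift $x\mapsto x+1$ preserves the degree in the transcendental parameter. The only real difference is where the degree comparison is carried out. The paper first invokes Theorem~\ref{thm:OneToOne} to produce a strong rational general solution $\omega(x,c)$ of the associated equation and then equates the $c$-degrees of the two sides of $p_1(x+1,\omega(x+1,c))=p_2(x,\omega(x,c))$ via Binder's multiplicativity formula, using $\deg_c\omega(x+1,c)=\deg_c\omega(x,c)\ge 1$. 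You instead compare the $t$-degrees of the two components of the (generally improper) parametrization $\mathcal{Q}(x,t)=(s(x,t),s(x+1,t))$ through the tracing-index formula $\deg_t s(x,t)=\deg(\mathcal{Q})\cdot\deg_z F$ and $\deg_t s(x+1,t)=\deg(\mathcal{Q})\cdot\deg_y F$. Since $\mathcal{Q}=\mathcal{P}\circ\omega$ and the tracing index of $\mathcal{Q}$ equals $\deg_c\omega$, the two computations are the same numbers factored differently; your version has the mild advantage of not needing Theorem~\ref{thm:OneToOne} (in particular, not needing $\omega\in\K(x,c)$) for this step, at the price of quoting the degree formula for improper parametrizations rather than only the proper case. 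The non-constancy checks you flag (both components of $\mathcal{Q}$ depend on $t$ because $s\notin\K(x)$ and the shift is a $\K(t)$-linear field automorphism of $\K(x,t)$ fixing $t$) are exactly the points that need care, and you handle them.
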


\begin{proof}
By item (ii) of Theorem \ref{thm:StrongSolution}, 
the corresponding algebraic curve~$\mathcal{C}_F$ is of genus zero.
By Proposition~\ref{prop:birational_transformation}, there exists a birational transformation $\mathcal{P}$ 
from $\mathbb{A}^1(\overline{\mathbb{K}(x)})$ to~$\mathcal{C}_F$ defined by  $\mathcal{P}(x,t)=(p_1(x,t),p_2(x,t))$ for some $p_1(x,t),p_2(x,t) \in \K(x,t)^2$.
Due to \cite[Theorem 4.21, p.98]{Sendra2008}, we have 
\begin{equation}\label{eq:degree_pi}
\deg_t p_1 = \deg_{z} F, \quad \text{ and } \quad \deg_t p_2 = \deg_y F.
\end{equation} 
By Theorem \ref{thm:OneToOne}, we know that the associated difference equation also admits a strong rational general solution, 
say $\omega(x,c) \in \K(x,c) \setminus \K(x)$, \ie,
$$p_1(x+1,\omega(x+1,c))=p_2(x,\omega(x,c)).$$
By~\cite[Proposition 1.2, item 11]{Binder1996}, it follows that the degrees of the rational functions in both sides of the above equation 
with respect to $c$ is equal to 
$$\deg_t p_1 \cdot \deg_c \omega(x+1,c) = \deg_t p_2 \cdot \deg_c \omega(x,c).$$
Since $\deg_c \omega(x+1,c) = \deg_c \omega(x,c) \geq 1$, we have $\deg_t p_1 = \deg_t p_2$. 
Therefore, we conclude from~\eqref{eq:degree_pi} that $\deg_t p_1 = \deg_t p_2=\deg_{z}F = \deg_y F$.
\end{proof}

\begin{remark} \label{REMARK:separable}
As a consequence of Theorem~\ref{thm:OneToOne} and Proposition~\ref{prop:ReducedEquation}, in order to solve Problem~\ref{prob:main}, 
we only need to consider the class of difference equations of the form
\begin{equation} \label{EQ:separate}
P(x,y(x+1))=Q(x,y(x)).
\end{equation}
for some rational functions $P, Q \in \K(x,z)$ such that $\deg_zP=\deg_zQ$, 
and determine their strong rational general solutions.
We call~\eqref{EQ:separate} a \emph{(rational) separable difference equation}.
If furthermore $P$ and $Q$ are in $\mathbb{K}(z)$,  then we call~\eqref{EQ:separate} 
the \emph{autonomous  separable  difference equation}. 
\end{remark}


\begin{example} \label{EX:non-autonomous}
Consider the following non-autonomous first-order \AODE:
\[
\begin{array}{lll} 
 F(x,y(x),y(x + 1)) & = & (y(x)+x^2+2x+1)y(x + 1)^2 - \\
                    &   & (y(x)+2x^2+2x)y(x) y(x + 1) + x^2y(x)^2 = 0. 
\end{array}     
\]
The corresponding algebraic curve in the affine plane $\mathbb{A}^2(\overline{\mathbb{K}(x)})$ is defined by 
$$F(x,y,z)=(y+x^2+2x+1)z^2-(y+2x^2+2x)yz+x^2y=0.$$
This curve is of genus zero and it admits the following optimal parametrization:
$$\mathcal{P}(x,t)=(p_1(x, t), p_2(x, t) ) = \left( \frac{(xt-1)^2}{t},\frac{(xt-1)^2}{t(t+1)} \right).$$
The associated separable difference equation is
\begin{equation*} 
\frac{((x+1) \omega(x + 1)-1)^2}{\omega(x + 1)}=\frac{(x \omega(x) -1)^2}{\omega(x) (\omega(x) +1)}.
\end{equation*}
By a detailed greatest common divisors argument, we can show that $\omega = \frac{1}{x + c}$ is a strong rational general solution 
of the above equation, 
where $c$ is an arbitrary constant.  Therefore, it follows from Theorem~\ref{thm:OneToOne} that $y(x)=p_1(x,\omega(x))=\frac{c^2}{x+c}$ 
is a strong rational general solution 
of $F(x, y(x), y(x + 1)) = 0$.
\end{example}

\section{Rational solutions of autonomous first-order \AODEs} \label{SECT:ratsol}


In this section, we restrict our consideration to 
the class of autonomous first-order {\AODEs} and 
provide a constructive approach for solving the following problem.

\begin{problem}\label{prob:autonomous}
Let $F \in \mathbb{K}[y,z] \setminus \mathbb{K}[y]$ be an irreducible polynomial. Find all rational solutions of the difference equation
$F(y(x),y(x+1))=0$ if there is any.
\end{problem}

We first apply the results from the previous section to reduce an autonomous first-order {\AODE} 
to an autonomous separable difference equation.
Next, we prove (Proposition~\ref{prop:exist_c_K} and Theorem~\ref{THM:exist_c_C}) that the problem of 
finding rational solutions of an autonomous separable difference equation can be transformed to 
that of determining polynomial solutions of a difference system of order one in two dependent variables.
Using resultant theory, one can always eliminate one dependent variable from that difference system, and thus 
obtain a nontrivial (Theorem~\ref{THM:nonzero_intersection} and Corollary~\ref{COR:elimination}) 
autonomous homogeneous second-order {\AODE} with respect to the other dependent variable. 
Finally, we give a degree bound (Proposition~\ref{PROP:polysoldegbound}) for polynomial solutions of that second-order difference equation.

\subsection{Reduce to an autonomous separable difference equation} \label{SUBSECT:reduceseparable}

Let us consider the following autonomous first-order \AODE:
\begin{equation}\label{eq:autonomous}
F(y(x),y(x+1))=0,
\end{equation}
where $F \in \mathbb{K}[y,z] \setminus \mathbb{K}[y]$ is an irreducible polynomial.
Assume that $y(x) \in \mathbb{K}(x)$ is a rational solution.
First, we observe that every constant function solutions of \eqref{eq:autonomous} are solutions of the algebraic equation $F(x,x)=0$.
Therefore, to avoid triviality, we can assume that the degree of $y(x)$
is at least one.
In this case, the function $y(x+c)$ is a strong rational general solution, where $c$ is a transcendental element over $\mathbb{K}(x)$.
Thus, we can conclude that the problem of finding non-constant rational solutions of an autonomous first-order {\AODE}
is equivalent to that of finding its strong rational general solutions.

Now let us assume that $y(x)$ is a non-constant rational solution of the given difference equation.
Then the difference equation admits a strong rational general solution.
By Theorem~\ref{thm:StrongSolution}, the corresponding 
algebraic curve $\mathcal{C}_F \subset \mathbb{A}^2(\overline{\mathbb{K}(x)})$ defined by $F(y,z)=0$ is of genus zero.
Since the coefficients of $F$ do not involve $x$, there exists a birational transformation 
defined by $\mathcal{P}(t)=(P(t),Q(t)) \in \mathbb{K}(t) \times \mathbb{K}(t)$ from 
the affine line $\mathbb{A}^1(\overline{\mathbb{K}})$ to $\mathcal{C}_F$.
This birational transformation can be chosen to be an optimal parametrization of the algebraic curve in $\mathbb{A}^{2}(\mathbb{K})$ defined by $F(y,z)=0$ (see \cite{Sendra2008}).
Furthermore, by Proposition~\ref{prop:ReducedEquation}, we have $\deg P=\deg Q = \deg_yF$.
By Theorem~\ref{thm:OneToOne}, we conclude that 
finding rational solutions of the associated autonomous separable difference equation $P(y(x+1))=Q(y(x))$ 
of~\eqref{eq:autonomous} is enough for solving Problem~\ref{prob:autonomous}.


\subsection{Reduce to the problem of finding polynomial solutions of an autonomous first-order difference system} \label{SUBSEC:reduction}

Based on arguments of the previous subsection, 
we now restrict our consideration further to the class of autonomous separable difference equations.


\begin{problem}\label{prob:separable_rational}
Let $P_1,P_2,Q_1,Q_2$ be polynomials in $\mathbb{K}[z] \setminus \{ 0 \}$ such 
that $\gcd(P_1,Q_1)=\gcd(P_2,Q_2)=1$ and $\deg \frac{P_1}{Q_1} = \deg \frac{P_2}{Q_2} = n \geq 1$.
Find all rational solutions of the difference equation
\begin{equation}\label{eq:separable_rational}
\frac{P_1(y(x+1))}{Q_1(y(x+1))} = \frac{P_2(y(x))}{Q_2(y(x))}.
\end{equation}
\end{problem}

If $n=1$, then the difference equation \eqref{eq:separable_rational} is of general Riccati type (see \cite[Section~2.6]{Elaydi2004}).
A difference equation of general Riccati type can be solved by transforming it into a second-order linear difference equation 
with polynomial coefficients and then solving the latter one.
Details about the transformation can be found in \cite[Section~2.6]{Elaydi2004}.
Unfortunately, this method can not be generalized immediately to the arbitrary degree case.

We give a new method that works for the general case. 
The first step of this approach is to reduce Problem~\ref{prob:separable_rational}  
to that of determining polynomial solutions of a system of autonomous first-order {\AODEs} 
(see Problem~\ref{prob:system_polynomial}). 
For doing that, we first need some technical lemmas.

\begin{lemma} \label{lem:Divide}
Let $P,Q,R,S \in \mathbb{K}[x]$ be nonzero polynomials such that $\gcd(P,Q)=\gcd(R,S)=1$ and $\frac{P}{Q}=\frac{R}{S}$. 
Then there exists a constant $c$ in $\mathbb{K}$ such that $P=cR$ and $Q=cS$.
\end{lemma}

\begin{proof}
Straightforward.
\end{proof}


\begin{lemma}\label{lem:RS_coprime}
Let $R,S \in \mathbb{K}[z,w]$ be homogeneous polynomials such that $\gcd (R,S)=1$, and $A, B$ be polynomials in $\K[x]$ 
such that $\gcd (A,B)=1$. Then $$\gcd \left( R(A(x),B(x)),S(A(x),B(x)) \right) = 1.$$
\end{lemma}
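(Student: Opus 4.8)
The plan is to manufacture Bézout-type identities for the binary forms $R$ and $S$ and then specialize $z\mapsto A(x)$, $w\mapsto B(x)$. First I would dispose of the trivial case in which $R$ or $S$ is a nonzero constant: then $R(A(x),B(x))$ or $S(A(x),B(x))$ is a nonzero constant and the claim is immediate, so assume from now on that $R$ and $S$ both have positive degree. Since $\gcd(R,S)=1$, the forms $R$ and $S$ have no common zero in $\mathbb{P}^1(\overline{\mathbb{K}})$; indeed a common projective zero $[a:b]$ would produce a common factor of $R$ and $S$ (if $b\neq 0$ the linear form $bz-aw$ divides both, by dehomogenizing at $w$ and using that $R(z,1),S(z,1)$ are nonzero polynomials with common root $a/b$; if $b=0$ then $a\neq 0$ and $R(a,0)=S(a,0)=0$ force $w\mid R$ and $w\mid S$), contradicting coprimality. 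Hence the affine zero set $V(R,S)\subseteq\mathbb{A}^2(\overline{\mathbb{K}})$ equals $\{(0,0)\}$, so by the Nullstellensatz $\sqrt{(R,S)}=(z,w)$. Choosing a sufficiently large $N$ and extracting homogeneous components, I obtain homogeneous polynomials $U_1,V_1,U_2,V_2\in\overline{\mathbb{K}}[z,w]$ with
\[
U_1R+V_1S=z^{N},\qquad U_2R+V_2S=w^{N}.
\]

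Substituting $z=A(x)$ and $w=B(x)$ into these polynomial identities gives, in $\overline{\mathbb{K}}[x]$,
\[
U_1(A,B)\,R(A,B)+V_1(A,B)\,S(A,B)=A(x)^{N},\qquad U_2(A,B)\,R(A,B)+V_2(A,B)\,S(A,B)=B(x)^{N},
\]
with the shorthand $R(A,B):=R(A(x),B(x))$, and similarly for $S$ and the cofactors. Let $D:=\gcd\bigl(R(A(x),B(x)),\,S(A(x),B(x))\bigr)$. The first identity shows $D\mid A(x)^{N}$ and the second shows $D\mid B(x)^{N}$; since $\gcd(A,B)=1$ forces $\gcd(A^{N},B^{N})=1$, we get $D\mid 1$, that is, $D$ is a unit. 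As $R(A(x),B(x))$ and $S(A(x),B(x))$ lie in $\mathbb{K}[x]$, their gcd over $\mathbb{K}$ coincides up to a unit with the one computed over $\overline{\mathbb{K}}$, so $\gcd(R(A(x),B(x)),S(A(x),B(x)))=1$ in $\mathbb{K}[x]$ as well.

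I do not expect a genuine obstacle: the argument is essentially routine once the right identity is in hand. The only points that deserve a careful line are the equivalence between coprimality of the two binary forms and the absence of a common point of $\mathbb{P}^1(\overline{\mathbb{K}})$ — handled by the common-factor remark above — and the passage between $\mathbb{K}$ and $\overline{\mathbb{K}}$, which is harmless because gcd's of univariate polynomials over a field are insensitive to field extension. If one prefers to avoid the Nullstellensatz altogether, the same conclusion follows directly: working over $\overline{\mathbb{K}}$, a common root $\alpha$ of $R(A(x),B(x))$ and $S(A(x),B(x))$ satisfies $(A(\alpha),B(\alpha))\neq(0,0)$ because $\gcd(A,B)=1$, and then the projective point $[A(\alpha):B(\alpha)]$ is a common zero of $R$ and $S$, yielding a common factor of $R$ and $S$ exactly as above — contradicting $\gcd(R,S)=1$.
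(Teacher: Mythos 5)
Your proof is correct, but it takes a genuinely different route from the paper. The paper's proof is bottom-up: it uses the algebraic closedness of $\mathbb{K}$ to factor the binary forms $R$ and $S$ into linear homogeneous factors, handles the case $\deg R=\deg S=1$ by an explicit Euclidean manipulation ($\gcd(A+aB,\,bA+cB)=\gcd(A+aB,\,(c-ab)B)=\gcd(A,B)=1$), and then concludes by multiplicativity of coprimality across the pairwise-coprime linear factors. You instead go top-down: coprime binary forms have no common zero in $\mathbb{P}^1(\overline{\mathbb{K}})$, so by the Nullstellensatz $z^{N},w^{N}\in(R,S)$ for large $N$; specializing the resulting identities at $(A(x),B(x))$ shows any common divisor of $R(A,B)$ and $S(A,B)$ divides both $A^{N}$ and $B^{N}$, hence is a unit. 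What your Bézout-identity version buys is the avoidance of any case analysis and of the explicit factorization; it also degrades gracefully in edge cases (e.g.\ if one of $R(A,B),S(A,B)$ were zero, the identities immediately force $A=B=0$, a contradiction), and the descent of the gcd from $\overline{\mathbb{K}}[x]$ to $\mathbb{K}[x]$ is handled cleanly. What the paper's version buys is complete elementarity -- no Nullstellensatz, only the Euclidean algorithm and factorization of binary forms. Your closing ``alternative'' argument (a common root $\alpha$ of $R(A,B)$ and $S(A,B)$ yields the common projective zero $[A(\alpha):B(\alpha)]$ of $R$ and $S$, using that $\gcd(A,B)=1$ rules out $A(\alpha)=B(\alpha)=0$) is essentially the contrapositive of the paper's factorization argument and is arguably the shortest correct path of the three.
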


\begin{proof}
To avoid triviality, we may assume that $R$ and $S$ are of degrees at least~$1$.
We first prove the lemma for the case $\deg R = \deg S = 1$.
In this case, without loss of generality, we can assume that $R=z+aw$ and $S=bz+cw$ for some $a,b,c \in \mathbb{K}$ such that $(b,c) \neq (0,0)$.
Since $\gcd(R,S)=1$, we have $c-ab \neq 0$.
Then we have
\begin{align*}
\gcd(R(A,B),S(A,B)) & = \gcd (R(A,B),S(A,B) - bR(A,B))&\\
& = \gcd(A+aB,(c-ab)B)&\\
& = \gcd(A+aB,B) = \gcd(A,B) =1.&
\end{align*} 

Next, we assume that $R$ and $S$ have positive degrees.
Let $\deg R=m,\, \deg S=n$. 
Since $\mathbb{K}$ is algebraically closed, the homogeneous polynomials $R$ and $S$ can be factored into the products 
of linear homogeneous polynomials, say
\[
 R= R_1^{r_1} \cdot \ldots \cdot R_m^{r_m} , \text{ and } \ \  S= S_1^{s_1} \cdot \ldots \cdot S_n^{s_n},
\]
where $R_i,S_j$ are linear homogeneous polynomials in $\K[z, w]$ and $r_i,s_j$ are positive integers.
Since $R$ and $S$ are coprime, $R_i$ and $S_j$ are also coprime for every $i,\,j$. Using the above argument, we have that
$\gcd(R_i(A,B),S_j(A,B)) = 1$
for every $i,j$.
Hence, we conclude that $\gcd(R(A,B),S(A,B)) =1$.
\end{proof}

\begin{proposition}\label{prop:exist_c_K}
Using notations in Problem~\ref{prob:separable_rational}, we set
\begin{equation*}
\tilde{P}_i(z,w) = w^nP_i \left(\frac{z}{w}\right), \quad \text{ and } \quad \tilde{Q}_i(z,w) = w^nQ_i \left(\frac{z}{w}\right),
\end{equation*}
which are homogeneous polynomials of degree $n$ in $\mathbb{K}[z,w]$, $i = 1, 2$.
Assume that the function $\frac{A(x)}{B(x)}$ is a solution of equation~\eqref{eq:separable_rational}, 
where $A,B \in \mathbb{K}[x]$ and~$\gcd (A,B)=1$. 
Then there exists a constant $c \in \mathbb{K}$ such that 
\begin{equation}\label{sys:exist_c_C}
\left \{ 
\begin{aligned}
\tilde{P}_1(A(x+1),B(x+1)) = c \cdot \tilde{P}_2(A(x),B(x)),\\
\tilde{Q}_1(A(x+1),B(x+1)) = c \cdot \tilde{Q}_2(A(x),B(x)).
\end{aligned}
\right.
\end{equation}
\end{proposition}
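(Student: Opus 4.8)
The plan is to exploit the fact that $\frac{A(x)}{B(x)}$ satisfies \eqref{eq:separable_rational}, rewrite that equation in homogeneous form, and then invoke Lemma~\ref{lem:Divide} after checking the relevant coprimality. First I would substitute $y(x) = \frac{A(x)}{B(x)}$ into \eqref{eq:separable_rational}. Writing $u = \frac{A(x+1)}{B(x+1)}$ and $v = \frac{A(x)}{B(x)}$, the equation reads $\frac{P_1(u)}{Q_1(u)} = \frac{P_2(v)}{Q_2(v)}$. Multiplying numerator and denominator of the left-hand side by $B(x+1)^n$ and of the right-hand side by $B(x)^n$, and using the definition of $\tilde P_i, \tilde Q_i$ as homogenizations of degree $n$, this becomes
\[
\frac{\tilde P_1(A(x+1),B(x+1))}{\tilde Q_1(A(x+1),B(x+1))} = \frac{\tilde P_2(A(x),B(x))}{\tilde Q_2(A(x),B(x))}.
\]
(One minor point to dispatch: since $\deg \frac{P_i}{Q_i} = n$ and $\gcd(P_i,Q_i)=1$, at least one of $\tilde P_i,\tilde Q_i$ has a genuine $w$-degree forcing the homogenization to behave correctly, so no spurious cancellation of the $B^n$ factors occurs; this is a short check.)

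Next I would establish that both fractions above are in lowest terms, i.e.\ that $\gcd\bigl(\tilde P_1(A(x{+}1),B(x{+}1)),\,\tilde Q_1(A(x{+}1),B(x{+}1))\bigr)=1$ and similarly for the index $2$ pair. This is exactly where Lemma~\ref{lem:RS_coprime} enters: since $\gcd(P_i,Q_i)=1$ in $\K[z]$, the homogenizations $\tilde P_i,\tilde Q_i$ are coprime homogeneous polynomials in $\K[z,w]$ (coprimeness is preserved under homogenization of a reduced fraction of equal degree), and since $\gcd(A,B)=1$ and $\gcd(A(x{+}1),B(x{+}1))=1$ as well, Lemma~\ref{lem:RS_coprime} gives the two desired coprimeness statements.

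Having two equal rational functions in $\K(x)$, each in lowest terms, Lemma~\ref{lem:Divide} applies directly: there is a constant $c \in \K$ with
\[
\tilde P_1(A(x{+}1),B(x{+}1)) = c\,\tilde P_2(A(x),B(x)), \qquad
\tilde Q_1(A(x{+}1),B(x{+}1)) = c\,\tilde Q_2(A(x),B(x)),
\]
which is precisely system~\eqref{sys:exist_c_C}. The only genuinely substantive ingredient is the coprimeness of the homogenizations combined with Lemma~\ref{lem:RS_coprime}; I expect the main obstacle, such as it is, to be the bookkeeping needed to verify that homogenizing a degree-$n$ reduced fraction $\frac{P_i}{Q_i}$ yields a \emph{reduced} pair $(\tilde P_i,\tilde Q_i)$ of homogeneous polynomials — the subtlety being the possible common factor of $w$, which is ruled out precisely because $\deg\frac{P_i}{Q_i}=n$ means $P_i$ and $Q_i$ cannot both drop in degree. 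Everything else is a routine substitution and an application of the two lemmas.
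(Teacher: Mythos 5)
Your proposal is correct and follows essentially the same route as the paper: substitute and homogenize, reduce the coprimality of $\tilde P_i(A,B)$ and $\tilde Q_i(A,B)$ to that of $\tilde P_i,\tilde Q_i$ via Lemma~\ref{lem:RS_coprime}, handle the potential spurious factor $w^{n-\deg P_i}$ (or $w^{n-\deg Q_i}$) exactly as the paper does by noting that $\deg\frac{P_i}{Q_i}=n$ forces one of the two polynomials to have degree exactly $n$, and finish with Lemma~\ref{lem:Divide}. No gaps.
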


\begin{proof}
Substituting $\frac{A(x)}{B(x)}$ into~\eqref{eq:separable_rational} and clearing numerators and denominators of the both sides, 
we obtain 
\begin{equation} \label{eq:separable_AB}
\frac{\tilde{P}_1(A(x+1),B(x+1))}{\tilde{Q}_1(A(x+1),B(x+1))} = 
\frac{\tilde{P}_2(A(x),B(x))}{\tilde{Q}_2(A(x),B(x))}.
\end{equation}

Next, we prove that the numerators and the denominators of the right hand side for the above equation are coprime.
By Lemma~\ref{lem:RS_coprime}, it suffices to prove that $\tilde{P}_2$ and $\tilde{Q}_2$ are coprime polynomials in $\mathbb{K}[z,w]$.
Since $\deg \frac{P_2}{Q_2}=n$, without loss of generality, we can assume that $\deg P_2(z) = m$, $\deg Q_2(z) = n$ and $m \leq n$.
In this case, we have
\begin{equation} \label{EQ:homogenization}
 \tilde{P}_2(z,w) = w^{n-m} \cdot \left[w^m P_2\left( \frac{z}{w} \right) \right], \text{ and } \ \ 
 \tilde{Q}_2(z,w) = w^n Q_2\left(\frac{z}{w}\right).
\end{equation}
Note that $w^m P_2\left( \frac{z}{w} \right)$ and $w^n Q_2\left(\frac{z}{w}\right)$ are homogenizations of $P_2(z)$ and $Q_2(z)$, respectively. 
Since $P_2(z)$ and $Q_2(z)$ are coprime, we have $w^m P_2\left( \frac{z}{w} \right)$ and $\tilde{Q}_2(z,w)$ are coprime.
Furthermore, on account of $\deg Q_2(z) = n$, we see that $\tilde{Q}_2(z,w) = az^n + w Q(z,w)$ for some $a \in \mathbb{K} \setminus \{0\}$ and $Q(z,w) \in \mathbb{K}[z,w]$.
This implies that $w^{n-m}$ and $\tilde{Q}_2(z,w)$ are coprime, too.
Hence, we conclude from~\eqref{EQ:homogenization} that $\tilde{P}_2(z,w)$ and $\tilde{Q}_2(z,w)$ are coprime.

Similarly, we can also show that the numerator and the denominator of the left hand side of~\eqref{eq:separable_AB} are coprime. 
Therefore, it follows from Lemma~\ref{lem:Divide} that the claim of this proposition holds.
\end{proof}

It is straightforward to see that any pair of solutions $(A(x), B(x))$ of~\eqref{sys:exist_c_C} gives rise to a rational solution 
of~\eqref{eq:separable_rational}. Therefore, it suffices to consider polynomial solutions of~\eqref{sys:exist_c_C}. 
However, there is a new unknown constant appeared in~\eqref{sys:exist_c_C}. 
In order to determine the exact values for the constant $c$ in the above proposition, we introduce 
the following definition.

\begin{definition}\label{def:C}
Using notations in Problem~\ref{prob:separable_rational}, we rewrite $P_1, Q_1, P_2, Q_2 \in \mathbb{K}[z]$ as follows:
\begin{align*}
P_1(z)=r_{1,\textbf{high}} \cdot z^{k_{1,\textbf{high}}} + \cdots + r_{1,\textbf{low}} \cdot z^{k_{1,\textbf{low}}},\nonumber\\
Q_1(z)=s_{1,\textbf{high}} \cdot z^{\ell_{1,\textbf{high}}} + \cdots + s_{1,\textbf{low}} \cdot z^{\ell_{1,\textbf{low}}},\nonumber\\
P_2(z)=r_{2,\textbf{high}} \cdot z^{k_{2,\textbf{high}}} + \cdots + r_{2,\textbf{low}} \cdot z^{k_{2,\textbf{low}}},\nonumber\\
Q_2(z)=s_{2,\textbf{high}} \cdot z^{\ell_{2,\textbf{high}}} + \cdots + s_{2,\textbf{low}} \cdot z^{\ell_{2,\textbf{low}}},
\end{align*}
where $k_{i, \textbf{high}} > k_{i,\textbf{low}}$, and $\ell_{i, \textbf{high}} > \ell_{i,\textbf{low}}$, $i = 1, 2$.
Define a set $\mathcal{C} \subseteq \mathbb{K}$ associated to~\eqref{eq:separable_rational} recursively as follows:

\begin{itemize}
\item[(1)] Set $\mathcal{C} = \left \{ c \in \mathbb{K} \,|\, \exists \alpha \in \mathbb{K}\,:\,
P_1(\alpha)=cP_2(\alpha) \textbf{ and }
Q_1(\alpha)=cQ_2(\alpha) \right \}.$
\item[(2)] If $k_{1,\textbf{high}} = k_{2,\textbf{high}}$ then add $\frac{r_{1,\textbf{high}}}{r_{2,\textbf{high}}}$ and $0$ to $\mathcal{C}$,
\item[(3)] If $l_{1,\textbf{high}} = l_{2,\textbf{high}}$ then add $\frac{s_{1,\textbf{high}}}{s_{2,\textbf{high}}}$ and $0$ to $\mathcal{C}$,
\item[(4)] If $k_{1,\textbf{low}} = k_{2,\textbf{low}}$ then add $\frac{r_{1,\textbf{low}}}{r_{2,\textbf{low}}}$ and $0$ to $\mathcal{C}$,
\item[(5)] If $l_{1,\textbf{low}} = l_{2,\textbf{low}}$ then add $\frac{s_{1,\textbf{low}}}{s_{2,\textbf{low}}}$ and $0$ to $\mathcal{C}$.
\end{itemize}
We call $\mathcal{C}$ \emph{the set of constant candidates} of~\eqref{eq:separable_rational}.
\end{definition}


\begin{theorem}\label{THM:exist_c_C}
The constant $c$ in Proposition~\ref{prop:exist_c_K} is exactly one element 
in the set of constant candidates of~\eqref{eq:separable_rational}. 
\end{theorem}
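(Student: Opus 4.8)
The plan is to unwind the conclusion of Proposition~\ref{prop:exist_c_K}. So I start from polynomials $A,B\in\K[x]$ with $\gcd(A,B)=1$ such that $A/B$ solves \eqref{eq:separable_rational}, together with a constant $c\in\K$ satisfying the system \eqref{sys:exist_c_C}, and the goal is to show that $c\in\mathcal C$. The whole argument is a comparison of ``extreme'' coefficients, with respect to $x$, of the two polynomial identities in \eqref{sys:exist_c_C}, organised into three cases according to whether $\deg_x A$ equals, exceeds, or is less than $\deg_x B$. Throughout I will use the elementary fact that shifting the argument changes neither the degree nor the leading coefficient of a polynomial, so $A(x+1)$ and $A(x)$ have the same degree and leading coefficient, and likewise for $B$; consequently the extreme coefficients (in $x$) of $\tilde P_i(A(x+1),B(x+1))$ and of $\tilde P_i(A(x),B(x))$ are governed by the same data, and the same holds with $\tilde Q_i$ in place of $\tilde P_i$.

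The first case is $\deg_x A=\deg_x B$, which I set up so that it also covers the situation where $A/B$ is constant (then $d:=\deg_x B=0$; note that if $A=0$ then $\gcd(A,B)=1$ forces $B$ to be a nonzero constant, and the argument below still works with $\alpha=0$). Write $d:=\deg_x B$ and let $\alpha\in\K$ be the quotient of the leading coefficients of $A$ and $B$, so that $\alpha$ is exactly the value of $A/B$ when the latter is constant. Since $\tilde P_i(z,w)$ is homogeneous of degree $n$, every monomial of $\tilde P_i(A(x+1),B(x+1))$ has $x$-degree exactly $dn$, and summing the contributions of the monomials shows that the coefficient of $x^{dn}$ in it equals $\lc(B)^n P_i(\alpha)$; the analogous statement holds for $\tilde Q_i$ with $Q_i(\alpha)$ in place of $P_i(\alpha)$. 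Comparing the coefficient of $x^{dn}$ on the two sides of each equation of \eqref{sys:exist_c_C} and cancelling the nonzero scalar $\lc(B)^n$ yields $P_1(\alpha)=c\,P_2(\alpha)$ and $Q_1(\alpha)=c\,Q_2(\alpha)$, whence $c\in\mathcal C$ by item~(1) of Definition~\ref{def:C}.

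The second case is $\deg_x A>\deg_x B$, the case $\deg_x A<\deg_x B$ being entirely symmetric. Here $A/B$ is automatically non-constant, hence not a root of $P_1\ne0$, so $\tilde P_1(A(x+1),B(x+1))=B(x+1)^n P_1\!\bigl(A(x+1)/B(x+1)\bigr)$ is a nonzero polynomial, and the first equation of \eqref{sys:exist_c_C} forces $c\ne0$. The $x$-degree of the monomial $r_{1,j}z^{k_{1,j}}w^{n-k_{1,j}}$ evaluated at $\bigl(A(x+1),B(x+1)\bigr)$ is $n\deg_x B+k_{1,j}(\deg_x A-\deg_x B)$; since the exponents $k_{1,j}$ are distinct and $\deg_x A-\deg_x B>0$, these $x$-degrees are pairwise distinct and the monomial with $j=\textbf{high}$ is the unique one of maximal degree, so it is the genuine leading term, with nonzero leading coefficient $r_{1,\textbf{high}}\lc(A)^{k_{1,\textbf{high}}}\lc(B)^{\,n-k_{1,\textbf{high}}}$. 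The same holds for $\tilde P_2(A(x),B(x))$. Equating degrees of the two sides of the first equation of \eqref{sys:exist_c_C} gives $k_{1,\textbf{high}}=k_{2,\textbf{high}}$, and then equating leading coefficients and cancelling the nonzero factor $\lc(A)^{k_{1,\textbf{high}}}\lc(B)^{\,n-k_{1,\textbf{high}}}$ gives $c=r_{1,\textbf{high}}/r_{2,\textbf{high}}$, which lies in $\mathcal C$ by item~(2) of Definition~\ref{def:C}. The mirror computation for $\deg_x A<\deg_x B$ uses the monomial with $j=\textbf{low}$ and yields $c=r_{1,\textbf{low}}/r_{2,\textbf{low}}\in\mathcal C$ by item~(4). (In this organisation only items~(1), (2), (4) are used; the $\ell$-based items~(3), (5) would come in if one repeated the unequal-degree analysis on the second equation of \eqref{sys:exist_c_C} as a cross-check.)

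The step I expect to require the most care is the equal-degree case. There the ``leading coefficient'' $\lc(B)^nP_i(\alpha)$ of the evaluated homogenization can vanish, namely exactly when $\alpha$ is a root of $P_i$, so one may not speak of the precise $x$-degree of $\tilde P_i(A(x),B(x))$. The way around this is to compare the coefficient of the fixed power $x^{dn}$ rather than the actual leading term: this is a legitimate identity of scalars even when both coefficients happen to be $0$, and it still delivers $P_1(\alpha)=c\,P_2(\alpha)$ and $Q_1(\alpha)=c\,Q_2(\alpha)$. The only other point to be careful about is the justification that $c\ne0$ in the unequal-degree cases (so that the degree comparison is meaningful), which rests on the elementary observation that a non-constant rational function is never a root of a nonzero univariate polynomial.
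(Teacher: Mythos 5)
Your proof is correct and follows essentially the same route as the paper's: the same three-way case split on $\deg A$ versus $\deg B$, and the same comparison of extreme $x$-coefficients of the homogenized system, landing in items (1), (2) and (4) of Definition~\ref{def:C}. Your two refinements --- comparing the coefficient of the fixed power $x^{dn}$ in the equal-degree case (where the top coefficient may vanish), and observing that $c\neq 0$ in the unequal-degree cases because a non-constant rational function cannot be a root of a nonzero polynomial over $\K$, so that the degrees of the two sides must match exactly --- actually tighten the paper's argument, which instead runs through subcases that allow $c=0$.
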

\begin{proof}
We keep the notations of Proposition~\ref{prop:exist_c_K} and Definition~\ref{def:C}. 
Let $\frac{A(x)}{B(x)}$ be a solution of~\eqref{eq:separable_rational}, 
where $A(x)$ and $B(x)$ are the same as that in Proposition~\ref{prop:exist_c_K}.
Without loss of generality, we may further assume that 
the leading coefficient of $A(x)$ is $\alpha \in \mathbb{K} \setminus \{0\}$ and that of $B(x)$ is $1$. 
Let $P_1,P_2,Q_1,Q_2$ be polynomials in Definition~\ref{def:C}. 
And $\mathcal{C}$ is the set of constant candidates of~\eqref{eq:separable_rational}.
Consider the following three cases.

\begin{itemize}
\item[] \textbf{Case 1.} $\deg A = \deg B$.
By comparing the leading coefficients of polynomials in system~\eqref{sys:exist_c_C}, we obtain
$$
\left \{
\begin{aligned}
\tilde{P}_1(\alpha,1) = c \cdot \tilde{P}_2(\alpha,1),\\
\tilde{Q}_1(\alpha,1) = c \cdot \tilde{Q}_2(\alpha,1).
\end{aligned}
\right.
$$
In other words, we have
$$
\left \{
\begin{aligned}
P_1(\alpha) = c \cdot P_2(\alpha),\\
Q_1(\alpha) = c \cdot Q_2(\alpha).
\end{aligned}
\right.
$$
Hence, it follows from Definition~\ref{def:C} that $c \in \mathcal{C}$ .

\item[] \textbf{Case 2.} $\deg A > \deg B$.
The following table shows degrees and leading coefficients of polynomials appearing in system~\eqref{sys:exist_c_C}.

\begin{tabular}{ c c c }
\hline
polynomial & degree & leading coefficient\\
\hline
$\tilde{P}_1(A(x+1),B(x+1))$ & $k_{1,\textbf{high}} \deg A + (n-k_{1,\textbf{high}}) \deg B$ & $r_{1,\textbf{high}} \alpha^{k_{1,\textbf{high}}}$\\
$\tilde{Q}_1(A(x+1),B(x+1))$& $\ell_{1,\textbf{high}} \deg A + (n-\ell_{1,\textbf{high}}) \deg B$ & $s_{1,\textbf{high}} \alpha^{\ell_{1,\textbf{high}}}$\\
$\tilde{P_2}(A(x),B(x))$& $k_{2,\textbf{high}} \deg A + (n-k_{2,\textbf{high}}) \deg B$ & $r_{2,\textbf{high}} \alpha^{k_{2,\textbf{high}}}$\\
$\tilde{Q_2}(A(x),B(x))$& $\ell_{2,\textbf{high}} \deg A + (n-\ell_{2,\textbf{high}}) \deg B$ & $s_{2,\textbf{high}} \alpha^{\ell_{2,\textbf{high}}}$\\
\hline
\end{tabular}

Consider the following three subcases.
	\begin{itemize}
	\item[] \textbf{Subcase 2.1.} $\deg \tilde{P}_1(A(x+1),B(x+1)) \leq \deg \tilde{P_2}(A(x),B(x))$.
	
	This implies that $$k_{1,\textbf{high}} \deg A + (n-k_{1,\textbf{high}}) \deg B \leq k_{2,\textbf{high}} \deg A + (n-k_{2,\textbf{high}}) \deg B.$$ 
	Or equivalently, we have $k_{1,\textbf{high}} \leq k_{2,\textbf{high}}$.
	By comparing the coefficients of terms with degree $\deg \tilde{P_2}(A(x),B(x))$ in the first equation of system~\eqref{sys:exist_c_C}, we obtain
	\begin{equation*}
	c \cdot r_{2,\textbf{high}} \alpha^{k_{2,\textbf{high}}} = \left \{
	\begin{aligned}
		r_{1,\textbf{high}} \alpha^{k_{1,\textbf{high}}} & \text{ if } k_{1,\textbf{high}} = k_{2,\textbf{high}},\\
		0 & \text{ if } k_{1,\textbf{high}} < k_{2,\textbf{high}}.
	\end{aligned}
	\right.
	\end{equation*}
	Hence, we have 
	\begin{equation*}
	c = \left \{
	\begin{aligned}
		\frac{r_{1,\textbf{high}}}{r_{2,\textbf{high}}} & \text{ if } k_{1,\textbf{high}} = k_{2,\textbf{high}},\\
		0 & \text{ if } k_{1,\textbf{high}} < k_{2,\textbf{high}},
	\end{aligned}
	\right.
	\end{equation*}
	which belongs to $\mathcal{C}$.

	\item[] \textbf{Subcase 2.2} $\deg \tilde{Q}_1(A(x+1),B(x+1)) \leq \deg \tilde{Q_2}(A(x),B(x))$.\\
	This inequality is equivalent to $l_{1,\textbf{high}} \leq l_{2,\textbf{high}}$.
	Using the similar argument as that in the above subcase, we obtain
	\begin{equation*}
	c = \left \{
	\begin{aligned}
		\frac{s_{1,\textbf{high}}}{s_{2,\textbf{high}}} & \text{ if } l_{1,\textbf{high}} = l_{2,\textbf{high}},\\
		0 & \text{ if } l_{1,\textbf{high}} < l_{2,\textbf{high}},
	\end{aligned}
	\right.
	\end{equation*}
	which belongs to $\mathcal{C}$.

	\item[] \textbf{Subcase 2.3.} None of the above two subcases happen.\\
	In this case, we have $k_{1,\textbf{high}} > k_{2,\textbf{high}}$ and $s_{1,\textbf{high}} > s_{2,\textbf{high}}$.
	Thus, we have
	\begin{align*}
	n &= \deg \frac{P_1}{Q_1} = \max \{k_{1,\textbf{high}}, s_{1,\textbf{high}}\} \\
	&> \max \{k_{2,\textbf{high}}, s_{2,\textbf{high}}\} = \deg \frac{P_2}{Q_2} = n.
	\end{align*}
	This is impossible. Hence, this subcase can not happen.
	\end{itemize}

\item[] \textbf{Case 3.} $\deg A < \deg B$. 
The following table shows degrees and leading coefficients of polynomials appearing in system~\eqref{sys:exist_c_C}.

\begin{tabular}{ c c c }
\hline
polynomial & degree & leading coefficient\\
\hline
$\tilde{P}_1(A(x+1),B(x+1))$ & $k_{1,\textbf{low}} \deg A + (n-k_{1,\textbf{low}}) \deg B$ & $r_{1,\textbf{low}} \alpha^{k_{1,\textbf{low}}}$\\
$\tilde{Q}_1(A(x+1),B(x+1))$& $\ell_{1,\textbf{low}} \deg A + (n-\ell_{1,\textbf{low}}) \deg B$ & $s_{1,\textbf{low}} \alpha^{\ell_{1,\textbf{low}}}$\\
$\tilde{P_2}(A(x),B(x))$& $k_{2,\textbf{low}} \deg A + (n-k_{2,\textbf{low}}) \deg B$ & $r_{2,\textbf{low}} \alpha^{k_{2,\textbf{low}}}$\\
$\tilde{Q_2}(A(x),B(x))$& $\ell_{2,\textbf{low}} \deg A + (n-\ell_{2,\textbf{low}}) \deg B$ & $s_{2,\textbf{low}} \alpha^{\ell_{2,\textbf{low}}}$\\
\hline
\end{tabular}

Consider the following three subcases.

	\begin{itemize}
	\item[] \textbf{Subcase 3.1.} $\deg \tilde{P}_1(A(x+1),B(x+1)) \leq \deg \tilde{P_2}(A(x),B(x))$.
	
	This means that $$k_{1,\textbf{low}} \deg A + (n-k_{1,\textbf{low}}) \deg B \leq k_{2,\textbf{low}} \deg A + (n-k_{2,\textbf{low}}) \deg B.$$ 
	Or equivalently, $k_{1,\textbf{low}} \geq k_{2,\textbf{low}}$.
	Using a similar argument as that in \textbf{Subcase 2.1}, we obtain
	\begin{equation*}
	c = \left \{
	\begin{aligned}
		\frac{r_{1,\textbf{low}}}{r_{2,\textbf{low}}} & \text{ if } k_{1,\textbf{low}} = k_{2,\textbf{low}},\\
		0 & \text{ if } k_{1,\textbf{low}} > k_{2,\textbf{low}},
	\end{aligned}
	\right.
	\end{equation*}
	which belongs to $\mathcal{C}$.

 	\item[] \textbf{Subcase 3.2.} $\deg \tilde{Q}_1(A(x+1),B(x+1)) \leq \deg \tilde{Q_2}(A(x),B(x))$.
	
	This inequality is equivalent to $l_{1,\textbf{low}} \geq l_{2,\textbf{low}}$.
	Using a similar argument as that in \textbf{Subcase 2.1}, we have
	\begin{equation*}
	c = \left \{
	\begin{aligned}
		\frac{s_{1,\textbf{low}}}{s_{2,\textbf{low}}} & \text{ if } s_{1,\textbf{low}} = s_{2,\textbf{low}},\\
		0 & \text{ if } s_{1,\textbf{low}} > s_{2,\textbf{low}},
	\end{aligned}
	\right.
	\end{equation*}
	which also belongs to $\mathcal{C}$.	
	
	\item[] \textbf{Subcase 3.3.} None of the subcases \textbf{3.1} and \textbf{3.2} occur.\\
	In this case, we have $k_{1,\textbf{low}} < k_{2,\textbf{low}}$ and  $l_{1,\textbf{low}} < l_{2,\textbf{low}}$.
	Thus, we have  $k_{2,\textbf{low}} \geq 1$ and $l_{2,\textbf{low}} \geq 1$.
	In other words, both polynomials $P_2(z)$ and $Q_2(z)$ are divisible by $z$.
	This contradicts to the assumption that $\gcd (P_2,Q_2)=1$.
	Hence, this subcase cannot happen.
	\end{itemize}
\end{itemize}
Above all, we conclude that in each case we always have $c \in \mathcal{C}$.
\end{proof}

It might happen that the set of constant candidates $\mathcal{C}$ of~\eqref{eq:separable_rational} is an infinite set.
In this case, there are infinitely many choices for the constant $c$ appearing in the difference system \eqref{sys:exist_c_C}.
We will see in Proposition~\ref{prop:C_infinite} and~\ref{prop:f=f} below that the given difference equation~\eqref{eq:separable_rational} 
in this case is rather simple and it only has constant solutions.

\begin{proposition}\label{prop:C_infinite}
Let $P_i(z), Q_i(z) \in \K[z]$ be polynomials in~\eqref{eq:separable_rational}, $i = 1, 2$. 
Then the set of constant candidates of~\eqref{eq:separable_rational} is an infinite set 
if and only if $\frac{P_1(z)}{Q_1(z)} = \frac{P_2(z)}{Q_2(z)}$. 
\end{proposition}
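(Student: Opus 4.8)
The plan is to reduce the statement to a property of the set built in step~(1) of Definition~\ref{def:C}. Write $\mathcal{C}_0 := \{\, c \in \K : \exists\, \alpha \in \K \text{ with } P_1(\alpha) = c P_2(\alpha) \text{ and } Q_1(\alpha) = c Q_2(\alpha)\,\}$. Steps~(2)--(5) adjoin at most two elements each, so $\mathcal{C}$ is infinite if and only if $\mathcal{C}_0$ is infinite; hence it suffices to prove that $\mathcal{C}_0$ is infinite exactly when $\frac{P_1}{Q_1} = \frac{P_2}{Q_2}$.

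I would handle the implication ``if $\frac{P_1}{Q_1} \neq \frac{P_2}{Q_2}$ then $\mathcal{C}_0$ is finite'' first, since it is essentially a resultant count. Put $D := P_1 Q_2 - P_2 Q_1 \in \K[z]$, which is nonzero by hypothesis. If $c \in \mathcal{C}_0$ has a witness $\alpha$, then multiplying $P_1(\alpha) = c P_2(\alpha)$ by $Q_2(\alpha)$, multiplying $Q_1(\alpha) = c Q_2(\alpha)$ by $P_2(\alpha)$, and subtracting yields $D(\alpha) = 0$; thus $\alpha$ lies among the at most $\deg D$ roots of $D$. For each such $\alpha$ the hypothesis $\gcd(P_2, Q_2) = 1$ forces $P_2(\alpha)$ and $Q_2(\alpha)$ not to vanish together, so $c$ is determined by $\alpha$, namely $c = P_1(\alpha)/P_2(\alpha)$ or $c = Q_1(\alpha)/Q_2(\alpha)$. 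Hence $\mathcal{C}_0$, and with it $\mathcal{C}$, is finite.

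For the converse I would begin from $\frac{P_1}{Q_1} = \frac{P_2}{Q_2}$ and invoke Lemma~\ref{lem:Divide} (using the coprimality assumptions of Problem~\ref{prob:separable_rational}) to write $P_1 = \lambda P_2$ and $Q_1 = \lambda Q_2$ with $\lambda \in \K \setminus \{0\}$, so that \eqref{eq:separable_rational} degenerates to $f(y(x+1)) = f(y(x))$ for the single rational function $f = P_1/Q_1$ of degree $n \geq 1$. The remaining task, to produce infinitely many admissible constants, is the step I expect to be the main obstacle: the naive idea of reading a candidate $c = P_1(\alpha)/P_2(\alpha)$ off each point $\alpha$ fails, because that ratio is the single constant $\lambda$. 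The way around it, I think, is to describe the constant candidates intrinsically as the scalars that are simultaneously a value of $P_1/Q_1$ and of $P_2/Q_2$ at one common point of $\A^1(\K)$; when the two functions are equal to the non-constant $f$, a non-constant rational function of degree $n \geq 1$ omits at most one scalar value, so this set is cofinite in $\K$ and in particular infinite. The delicate part, and where I would spend the most effort, is to check that this intrinsic description matches the combinatorial recipe of Definition~\ref{def:C} --- in particular that the contributions coming from $z = 0$ and from $z = \infty$ are precisely what steps~(2)--(5) record; a more computational variant of the same bookkeeping is to track when the univariate resultant $\Res_z(P_1(z) - c P_2(z),\, Q_1(z) - c Q_2(z)) \in \K[c]$, adjusted for the formal-degree drops encoded in steps~(2)--(5), stops confining $c$ to a finite set.
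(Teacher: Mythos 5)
Your reduction to the set $\mathcal{C}_0$ from step~(1) of Definition~\ref{def:C} and your proof that $\mathcal{C}_0$ is finite whenever $\frac{P_1}{Q_1}\neq\frac{P_2}{Q_2}$ are correct and essentially identical to the paper's argument: every witness $\alpha$ is a root of the nonzero polynomial $D=P_1Q_2-P_2Q_1$, and $\gcd(P_2,Q_2)=1$ forces each such $\alpha$ to determine $c$ uniquely, so $\mathcal{C}$ has at most $\deg D$ elements beyond the finitely many adjoined in steps~(2)--(5). This is also the only direction of the equivalence the paper actually uses downstream (it is what guarantees finitely many systems of the form~\eqref{sys:exist_c_C} once the degenerate case is handled by Proposition~\ref{prop:f=f}).

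The step you flag as ``the main obstacle'' in the converse is not something you failed to see how to do; it cannot be done, because that implication is false for the set defined in Definition~\ref{def:C}. As you yourself observe, if $\frac{P_1}{Q_1}=\frac{P_2}{Q_2}$ then Lemma~\ref{lem:Divide} gives $P_1=\lambda P_2$ and $Q_1=\lambda Q_2$ with $\lambda\in\K\setminus\{0\}$; then for every $\alpha$ the coprimality of $P_2,Q_2$ forces $c=\lambda$ in the two equations of step~(1), so $\mathcal{C}_0=\{\lambda\}$, and steps~(2)--(5) only adjoin $\lambda$ and $0$. Hence $\mathcal{C}=\{0,\lambda\}$ is finite (concretely, $P_1=P_2=z^2$, $Q_1=Q_2=z+1$ gives $\mathcal{C}=\{0,1\}$). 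Your proposed repair --- taking the candidates to be common values $c=P_1(\alpha)/Q_1(\alpha)=P_2(\alpha)/Q_2(\alpha)$ --- describes a genuinely different (and in the degenerate case cofinite) set, so the ``delicate matching'' with Definition~\ref{def:C} that you defer cannot be carried out. For what it is worth, the paper's own proof of this direction has exactly the same defect: it asserts that $\bar{\mathcal{C}}$ is infinite if and only if $\det\begin{pmatrix}P_1&P_2\\ Q_1&Q_2\end{pmatrix}$ has infinitely many roots, but only the forward implication is justified; when the determinant vanishes identically, the map $\alpha\mapsto c$ collapses to the constant $\lambda$ and the backward implication fails. The statement that is both true and needed is one-directional --- if $\mathcal{C}$ is infinite then $\frac{P_1}{Q_1}=\frac{P_2}{Q_2}$ (equivalently, $\mathcal{C}$ is in fact always finite) --- and your first paragraph proves exactly that.
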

\begin{proof}
Assume that $\mathcal{C}$ is the set of constant candidates of~\eqref{eq:separable_rational}. 
From Definition~\ref{def:C}, it is clear that $\mathcal{C}$ is an infinite set 
if and only if the set
$$\mathcal{\bar{\mathcal{C}}} = \left \{ c \in \mathbb{K} \,|\, \exists \alpha \in \mathbb{K}\,:\,
P_1(\alpha)=cP_2(\alpha) \textbf{ and }
Q_1(\alpha)=cQ_2(\alpha) \right \}$$
is an infinite set.

Note that if $P_1(\alpha)=cP_2(\alpha)$ and $Q_1(\alpha)=cQ_2(\alpha)$ for some $\alpha \in \K$, then we always have
\begin{equation}\label{eq:det}
\text{det} 
\begin{pmatrix}
P_1(\alpha)&P_2(\alpha)\\
Q_1(\alpha)&Q_2(\alpha)
\end{pmatrix} = 0.
\end{equation}
For each $\alpha \in \mathbb{K}$ satisfying \eqref{eq:det}, since $\gcd(P_2, Q_2) = 1$, 
there exists a unique $c \in \mathbb{K}$ such that $P_1(\alpha)=cP_2(\alpha)$ and $Q_1(\alpha)=cQ_2(\alpha)$.
Therefore, $\bar{\mathcal{C}}$ is an infinite set
if and only if the algebraic equation 
$$\text{det} 
\begin{pmatrix}
P_1(z)&P_2(z)\\
Q_1(z)&Q_2(z)
\end{pmatrix} = 0$$
has infinitely many solutions. 
The latter happens if and only if the left hand side is actually the zero polynomial. 
In other words, we have $\frac{P_1(z)}{Q_1(z)} = \frac{P_2(z)}{Q_2(z)}$.
\end{proof}

In~\eqref{eq:separable_rational}, 
if $\frac{P_1(z)}{Q_1(z)} \neq \frac{P_2(z)}{Q_2(z)}$, then we can construct $\mathcal{\bar{C}}$ of the above proof 
in an algorithmic way. There, we can compute the set of constant candidates of~\eqref{eq:separable_rational} in this case. 

The following proposition provides an answer for Problem~\ref{prob:separable_rational} 
if $\frac{P_1(z)}{Q_1(z)} = \frac{P_2(z)}{Q_2(z)}$.

\begin{proposition}\label{prop:f=f}
Let $f(z) \in \mathbb{K}(z)$ be a non-constant rational function.
Then rational solutions of the difference equation $f(y(x+1)) = f(y(x))$ are only constants in $\K$.
\end{proposition}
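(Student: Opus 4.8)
The plan is to suppose for contradiction that $y(x) \in \mathbb{K}(x) \setminus \mathbb{K}$ satisfies $f(y(x+1)) = f(y(x))$, and to derive that $y(x)$ must in fact be constant. Write $g(x) := f(y(x)) \in \mathbb{K}(x)$; the hypothesis says that $g$ is invariant under the shift $x \mapsto x+1$. First I would record the elementary fact that a shift-invariant rational function over $\mathbb{K}$ is necessarily constant: if $g(x+1) = g(x)$ then $g$ takes the same value at every point of a coset $a + \mathbb{Z}$, hence infinitely many times, so $g - g(0)$ has infinitely many zeros and must be the zero rational function. Thus $f(y(x)) = \lambda$ for some $\lambda \in \mathbb{K}$.

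Next I would exploit that $f$ is non-constant. Writing $f = P/Q$ with $P, Q \in \mathbb{K}[z]$ coprime and $N := \deg(P/Q) = \max(\deg P, \deg Q) \geq 1$, the equation $f(y(x)) = \lambda$ becomes $P(y(x)) - \lambda\, Q(y(x)) = 0$ identically in $x$. The polynomial $h(z) := P(z) - \lambda Q(z) \in \mathbb{K}[z]$ is not identically zero: if it were, then $P = \lambda Q$ would contradict $\gcd(P,Q)=1$ together with $\deg(P/Q)\ge 1$ (at least one of $P,Q$ is non-constant, forcing $\lambda = 0$ and then $P = 0$, impossible since $f$ is non-constant, or $Q=0$, impossible since $f$ is well-defined). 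Hence $h$ is a nonzero polynomial in one variable, so it has only finitely many roots in $\mathbb{K}$. But $h(y(x)) = 0$ identically means $y(x_0)$ is a root of $h$ for every $x_0 \in \mathbb{K}$ at which $y$ is defined — infinitely many points, all mapping into the finite root set of $h$. Since $y$ is a rational function, $y - r$ for a fixed root $r$ of $h$ has infinitely many zeros for at least one such $r$, forcing $y \equiv r$, a constant. This contradicts $y \notin \mathbb{K}$.

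I expect the only subtlety — hardly an obstacle — is bookkeeping the degenerate cases in showing $h = P - \lambda Q \not\equiv 0$: one must check that the non-constancy of $f$ genuinely rules out $h \equiv 0$ for the particular scalar $\lambda$ arising. Since $f(y(x))$ equals the constant $\lambda$ and $f$ is a non-constant rational function, $\lambda$ lies in the image of $f$ on $\overline{\mathbb{K}}$, and a non-constant rational function cannot be identically equal to a constant, which is exactly the statement $P - \lambda Q \not\equiv 0$. Everything else is the two applications of the same principle: a rational function with infinitely many prescribed values is constant. This completes the argument.
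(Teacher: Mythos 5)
Your proposal is correct and follows essentially the same route as the paper: both first use shift-invariance to show $f(y(x))$ equals a constant $\lambda\in\K$, and then conclude that $y(x)$, being a root of the nonzero rational function $f(z)-\lambda$, must lie in $\K$. The only cosmetic difference is that the paper invokes ``algebraic over the algebraically closed field $\K$'' at the last step, whereas you unfold that into an explicit finite-root-set argument for $P-\lambda Q$.
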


\begin{proof}
Assume that $s(x)$ is a rational solution of the given difference equation. 
Then we have $f(s(x+1))=f(s(x))$ for every $x \in \mathbb{K}$ except finitely many points in $\K$.
Let us fix a constant $x_0 \in \mathbb{K}$ such that $f(s(x_0))$ is defined, 
and define the new rational function~$g$ by $g(x)=f(s(x))-f(s(x_0))$.
It is clear that $g(n+x_0)=0$ for every $n \in \mathbb{Z}$ except finitely many points in $\Z$.
This only happens when $g$ is the zero function.
Therefore, there exists a constant $c \in \mathbb{K}$ such that $f(s(x))-c$ is the zero function.
This means that the function $s(x)$ is a zero of the nonzero rational function $f(z)-c \in \mathbb{K}(z)$.
Thus, the rational function $s(x)$ is algebraic over $\mathbb{K}$.
Since $\mathbb{K}$ is algebraically closed, we conclude that $s(x) \in \mathbb{K}$.
\end{proof}

Using the above proposition, 
we obtain a simple answer for Problem~\ref{prob:separable_rational} 
if $\frac{P_1}{Q_1} = \frac{P_2}{Q_2}$.
In fact, in this case, every rational solution of the given autonomous separable difference equation is a constant function.

Now we can assume that $\frac{P_1}{Q_1} \neq \frac{P_2}{Q_2}$.
By Theorem~\ref{THM:exist_c_C}, the set of constant candidates~$\mathcal{C}$ of~\eqref{eq:separable_rational} 
is always a finite set.
As we have seen in Proposition~\ref{prop:exist_c_K}, 
in order to find a rational solution $\frac{A(x)}{B(x)}$ 
of the given separable difference equation~\eqref{eq:separable_rational}, 
we need to determine a pair of polynomials $(A(x),B(x))$ 
satisfying the difference system~\eqref{sys:exist_c_C} for each $c$ in the finite set~$\mathcal{C}$.
Therefore, we reduce Problem~\ref{prob:separable_rational} to that of finding polynomial solutions 
of finitely many difference systems of the form~\eqref{sys:exist_c_C}.
The latter one is addressed in next subsection.

\subsection{Reduce to the problem of finding polynomial solutions of an autonomous second-order \AODE} \label{SUBSEC:elimination}

In this subsection, we consider the problem of finding polynomial solutions of a difference system of the form~\eqref{sys:exist_c_C}. 
Let $\tilde{P}_2, \tilde{Q}_2 \in \K[z]$ be polynomials and $c$ be the constant in~\eqref{sys:exist_c_C}. 
By Theorem~\ref{THM:exist_c_C}, we can compute all possible values of $c$ by calculating the set of constant candidates 
of~\eqref{eq:separable_rational}. Assume that $c$ is given in~\eqref{sys:exist_c_C}. 
Note that if $c = 0$, then it follows from the fact that $\K$ is algebraically closed 
that system~\eqref{sys:exist_c_C} only has constant solutions in this case. 
Therefore, we may further assume that~$c \neq 0$.
Then we can replace $c \tilde{P}_2$ by $\bar{P}_2$ and $c \tilde{Q}_2$ by $\bar{Q}_2$. 
Thus, without loss of generality, we can also assume that $c=1$.
To be more specific, we focus on the following question.

\begin{problem}\label{prob:system_polynomial}
Let $P_1,P_2,Q_1,Q_2 \in \K[z] \setminus \{ 0 \}$ be polynomials such that~$\gcd(P_1,Q_1)=\gcd(P_2,Q_2)=1$ 
and $\deg \frac{P_1}{Q_1} = \deg \frac{P_2}{Q_2} = n \geq 1$.
Set
\begin{equation*}
\tilde{P}_i(z,w) = w^nP_i \left(\frac{z}{w}\right), \quad \text{ and } \quad \tilde{Q}_i(z,w) = w^nQ_i \left(\frac{z}{w}\right),
\end{equation*}
which are homogeneous polynomials of degree $n$ in $\mathbb{K}[z,w]$, $i = 1, 2$.
Determine all polynomials $A(x),B(x) \in \mathbb{K}[x]$ with $\gcd(A,B)=1$ such that 
\begin{equation}\label{sys:system_polynomial}
\left \{ 
\begin{aligned}
\tilde{P}_1(A(x+1),B(x+1)) = \tilde{P}_2(A(x),B(x)),\\
\tilde{Q}_1(A(x+1),B(x+1)) = \tilde{Q}_2(A(x),B(x)).
\end{aligned}
\right.
\end{equation}
\end{problem}

Next, we will present an algorithm for deriving nonzero second-order 
{\AODEs} for $A(x)$ and $B(x)$ from system~\eqref{sys:system_polynomial}, respectively.
The technique we use is similar to the prolongation-relaxation approach used in differential algebra (see \cite[Section~87-88]{Ritt1932}).

\begin{algo}\label{alg:system_to_single_DEs}
Given difference system \eqref{sys:system_polynomial}, compute nonzero second-order 
{\AODEs} for $A(x)$ and $B(x)$, respectively.
\begin{itemize}
\item[(1)] Let $I \subseteq \mathbb{K}[w_0,w_1,w_2,z_0,z_1,z_2]$ be the ideal generated by the following polynomials
	\begin{align*}
	&\tilde{P}_1(z_1,w_1) - \tilde{P}_2(z_0,w_0), \ \ \tilde{Q}_1(z_1,w_1) - \tilde{Q}_2(z_0,w_0),\\
	&\tilde{P}_1(z_2,w_2) - \tilde{P}_2(z_1,w_1), \ \ \tilde{Q}_1(z_2,w_2) - \tilde{Q}_2(z_1,w_1).
	\end{align*}
Using Gr\"{o}bner bases or resultants, compute nonzero elements $F_A \in I \cap \K[z_0, z_1, z_2]$ 
and $F_B \in I \cap \K[w_0, w_1, w_2]$. 
\item[(2)] Return $F_A(A(x), A(x+ 1), A(x + 1)) = 0$ and $F_B(B(x), B(x + 1), B(x + 2)) = 0$.
\end{itemize}
\end{algo}

The termination of the above algorithm is clear. 
The aim of this subsection is to prove the correctness of the above algorithm. 
To be more specific, we will prove that one can always derive nonzero second-order 
{\AODEs} for $A(x)$ and~$B(x)$ from system~\eqref{sys:system_polynomial}, respectively. 

In our arguments, we use polynomial resultants to derive those nonzero second-order 
{\AODEs} for $A(x)$ and $B(x)$, respectively.
For polynomials $P(z)$ and $Q(z)$ in $\K[z]$, we denote by $\Res_{z}(P(z),Q(z))$ the resultant of $P$ and $Q$ with respect to $z$. 
The resultant is a polynomial over $\mathbb{Z}$ in the coefficients of $P$ and $Q$. 
For details about resultant theory, we refer to \cite[Chapter 3, \S 6]{Cox2015}. 

\begin{proposition}\label{prop:resultant_commute_substitution}
Let $f(z,w)$ and $g(z,w)$ be nonzero polynomials in $\mathbb{K}[z,w]$. 
Let $R(z) = \Res_{w}(f,g) \in \mathbb{K}[z]$ be the resultant of $f$ and $g$ with respect to $w$.
Then for any $c \in \mathbb{K}$ we have
$$R(c) = \Res_{w}(f(c,w),g(c,w)).$$
In other words, the resultant computation always commutes with the substitution of polynomials.
\end{proposition}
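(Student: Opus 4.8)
The plan is to reduce the statement to the elementary fact that a ring homomorphism commutes with the determinant of a matrix. Write $f(z,w) = \sum_{i=0}^{m} a_i(z) w^i$ and $g(z,w) = \sum_{j=0}^{n} b_j(z) w^j$ with $m = \deg_w f$ and $n = \deg_w g$, so that the coefficients $a_i(z), b_j(z)$ lie in $\mathbb{K}[z]$. Viewing $f$ and $g$ as polynomials in $w$ over the ring $\mathbb{K}[z]$, the resultant $R(z) = \Res_w(f,g)$ is by definition the determinant of the $(m+n)\times(m+n)$ Sylvester matrix $S(z)$ whose entries are among the $a_i(z)$, the $b_j(z)$, and $0$; in particular $R(z) \in \mathbb{K}[z]$, as is implicit in the statement.

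First I would fix $c \in \mathbb{K}$ and consider the evaluation homomorphism $\sigma : \mathbb{K}[z] \to \mathbb{K}$, $p(z) \mapsto p(c)$. Since the determinant of a square matrix is given by the Leibniz formula, i.e. a fixed universal polynomial expression in the matrix entries involving only the ring operations, any ring homomorphism applied entrywise commutes with taking determinants. Applying $\sigma$ entrywise to $S(z)$ therefore yields
\[
R(c) = \sigma\big(\det S(z)\big) = \det\big(\sigma(S(z))\big) = \det S(c).
\]
On the other hand, $S(c)$ is precisely the Sylvester matrix built from the coefficient sequences $(a_0(c),\dots,a_m(c))$ and $(b_0(c),\dots,b_n(c))$, which are the coefficient sequences of $f(c,w)$ and $g(c,w)$ in $w$; hence $\det S(c) = \Res_w\big(f(c,w),g(c,w)\big)$, and the proposition follows.

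The one point that needs care — and the only place this argument could break down — is the size of the Sylvester matrix: the last identity $\det S(c) = \Res_w(f(c,w),g(c,w))$ uses that $S(c)$ is the Sylvester matrix associated with the degrees $m$ and $n$, which agrees with the usual resultant of $f(c,w)$ and $g(c,w)$ only when the substitution $z = c$ does not lower the $w$-degree, i.e. when $a_m(c)\neq 0$ and $b_n(c)\neq 0$. If a leading $w$-coefficient vanishes at $c$, one must instead invoke the general specialization formula, in which $\Res_w(f(c,w),g(c,w))$ differs from $R(c)$ by an explicit factor built from $a_m(c)$ or $b_n(c)$. In the applications of this proposition the relevant leading coefficients in the eliminated variable are nonzero constants, so this subtlety does not occur and the plain statement suffices; I would add a sentence making this hypothesis explicit if needed.
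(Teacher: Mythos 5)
Your argument is correct, and it is essentially an expansion of what the paper leaves implicit: the paper's entire proof of Proposition~\ref{prop:resultant_commute_substitution} is the sentence ``It is clear from the definition of resultants,'' and the Sylvester-matrix computation you give (determinants commute with the evaluation homomorphism $\mathbb{K}[z]\to\mathbb{K}$, $p\mapsto p(c)$, applied entrywise) is exactly the content of that remark. More importantly, the caveat in your last paragraph is not a pedantic aside: as stated, the proposition is false at those $c$ where a leading $w$-coefficient vanishes. For instance, with $f = zw+1$ and $g = w+1$ one has $R(z) = z-1$, so $R(0) = -1$, while $\Res_w\bigl(f(0,w),g(0,w)\bigr) = \Res_w(1,\,w+1) = 1$. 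What holds unconditionally is only $R(c) = \det S(c)$; this equals $\Res_w\bigl(f(c,w),g(c,w)\bigr)$ on the nose only when $a_m(c)\neq 0$ and $b_n(c)\neq 0$, and otherwise differs by an explicit power of the surviving leading coefficient (and a sign). The paper invokes the proposition in the proof of Lemma~\ref{lem:4resultant}, where $R_0$ is a resultant with respect to $z_0$ and the substitution is $z_1\mapsto y_{1,0}$; there the leading coefficients with respect to $z_0$ do not involve $z_1$, so they survive the substitution and the clean identity is valid. Your proposal is therefore not only correct but more careful than the paper's; making the nonvanishing hypothesis (or the general specialization formula) explicit, as you suggest, would be a genuine improvement to the statement.
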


\begin{proof}
It is clear from the definition of resultants (see~\cite[Definition~2, p.162]{Cox2015}).
\end{proof}

\begin{lemma}\label{lem:2resultant}
Let $P_1,P_2,Q_1,Q_2$ be nonzero polynomials in $\mathbb{K}[z]$ with $\gcd(P_1,Q_1)=\gcd(P_2,Q_2)=1$. Set
\[R = \Res_{z_2} \left( P_1(z_1) - wP_2(z_2),Q_1(z_1) - wQ_2(z_2) \right).\]
Then $R$ is a nonzero polynomial in $\mathbb{K}[w,z_1]$.
\end{lemma}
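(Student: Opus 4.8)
The plan is to recall that $R=\Res_{z_2}(f,g)$ with $f:=P_1(z_1)-wP_2(z_2)$ and $g:=Q_1(z_1)-wQ_2(z_2)$ is an element of $\mathbb{K}[w,z_1]$, so it is enough to exhibit a single point $(w_0,z_1^0)\in(\mathbb{K}\setminus\{0\})\times\mathbb{K}$ at which $R$ does not vanish. By Proposition~\ref{prop:resultant_commute_substitution}, applied to the substitutions $z_1\mapsto z_1^0$ and then $w\mapsto w_0$ — both of which leave the $z_2$-leading coefficients $-w\lc(P_2)$ and $-w\lc(Q_2)$ of $f$ and $g$ nonzero, the second one because $w_0\neq0$ — this value equals
\[
R(w_0,z_1^0)=\Res_{z_2}\bigl(P_1(z_1^0)-w_0P_2(z_2),\,Q_1(z_1^0)-w_0Q_2(z_2)\bigr),
\]
so it suffices to choose $z_1^0$ and $w_0$ making the two univariate polynomials $P_1(z_1^0)-w_0P_2(z_2)$ and $Q_1(z_1^0)-w_0Q_2(z_2)$ coprime in $\mathbb{K}[z_2]$. (If $P_2$ and $Q_2$ are both constant the statement is trivial, as then $f$ and $g$ are free of $z_2$; so assume this is not the case.)

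First I would pick $z_1^0\in\mathbb{K}$ with $a:=P_1(z_1^0)\neq0$ and $b:=Q_1(z_1^0)\neq0$, which is possible since $P_1Q_1$ is a nonzero polynomial and $\mathbb{K}$ is infinite. A common root $\zeta\in\mathbb{K}$ of $a-w_0P_2(z_2)$ and $b-w_0Q_2(z_2)$ would satisfy $P_2(\zeta)=a/w_0$ and $Q_2(\zeta)=b/w_0$, hence $bP_2(\zeta)=aQ_2(\zeta)$, i.e. $\zeta$ is a root of $h(z_2):=bP_2(z_2)-aQ_2(z_2)$. Here $h$ is not identically zero: $h=0$ would give $bP_2=aQ_2$ with $a,b\in\mathbb{K}\setminus\{0\}$, making $P_2$ and $Q_2$ associates and hence, by $\gcd(P_2,Q_2)=1$, both constant, contrary to assumption. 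Moreover every root $\zeta$ of $h$ has $P_2(\zeta)\neq0$ — otherwise $bP_2(\zeta)=aQ_2(\zeta)$ with $a\neq0$ would force $Q_2(\zeta)=0$ as well, contradicting $\gcd(P_2,Q_2)=1$ — and the corresponding ``bad'' value of $w_0$ is then $a/P_2(\zeta)$. Thus the set of $w_0$ for which the two polynomials fail to be coprime is contained in the finite set $\{\,a/P_2(\zeta):h(\zeta)=0\,\}$; choosing $w_0\in\mathbb{K}\setminus\{0\}$ outside this finite set (again using that $\mathbb{K}$ is infinite) yields coprimality, whence $R(w_0,z_1^0)\neq0$ and therefore $R\neq0$.

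The step I expect to demand the most care is the appeal to Proposition~\ref{prop:resultant_commute_substitution}, i.e. checking that $R$ really does specialize the way claimed; this hinges on the $z_2$-leading coefficients $-w\lc(P_2)$ and $-w\lc(Q_2)$ staying nonzero under the substitutions, which is exactly where the constraint $w_0\neq0$ is needed and why it cannot be dropped. Everything else is elementary counting. A more conceptual alternative, which avoids specialization altogether, is to regard $f,g$ as polynomials in $z_2$ over $L=\mathbb{K}(w,z_1)$ and show they have no common root $\zeta\in\overline{L}$: such a $\zeta$ would give $P_2(\zeta)\neq0$, $w=P_1(z_1)/P_2(\zeta)$, and (eliminating $w$) $P_1(z_1)Q_2(\zeta)=Q_1(z_1)P_2(\zeta)$; the latter is a nontrivial polynomial relation over $\mathbb{K}[z_1]$ in $\zeta$ (by the same $\gcd$ argument that showed $h\neq0$), so $\zeta$, and hence $w=P_1(z_1)/P_2(\zeta)$, would be algebraic over $\mathbb{K}(z_1)$ — impossible, since $w$ and $z_1$ are algebraically independent over $\mathbb{K}$.
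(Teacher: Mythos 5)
Your main argument is correct, but it takes a genuinely different route from the paper. You specialize $(w,z_1)$ to a well-chosen point $(w_0,z_1^0)\in(\mathbb{K}\setminus\{0\})\times\mathbb{K}$ and show by a finite-bad-set count that the two specialized univariate polynomials are coprime, whence $R(w_0,z_1^0)\neq 0$. The paper instead works directly over $\overline{\mathbb{K}(w,z_1)}$: assuming $R=0$, it extracts a common root $z_{2,0}$ of the two polynomials, divides the two relations to see that $z_{2,0}$ is algebraic over $\mathbb{K}(z_1)$, and then gets a contradiction from the transcendence of $w$ over $\overline{\mathbb{K}(z_1)}$ in $Q_1(z_1)=wQ_2(z_{2,0})$. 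Your closing ``conceptual alternative'' is in substance exactly the paper's proof, so you have in effect found both. As for what each buys: the specialization route is elementary and stays inside $\mathbb{K}$, but it is the one that forces you to police the hypotheses of Proposition~\ref{prop:resultant_commute_substitution} --- you do so in the generic case, though note that when exactly one of $P_2,Q_2$ is constant the $z_2$-leading coefficient of the corresponding polynomial is $Q_1(z_1)-wQ_2$ (not $-w\lc(Q_2)$), so that subcase deserves a one-line separate remark (there $R$ is simply a power of $Q_1(z_1)-wQ_2$, visibly nonzero). The paper's transcendence argument sidesteps specialization entirely and scales better --- it is reused almost verbatim in the proof of Lemma~\ref{lem:4resultant} --- at the cost of invoking the vanishing-resultant-implies-common-root criterion over the algebraic closure of a function field.
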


\begin{proof}
Suppose that $R=0$.
Due to \cite[Proposition~3, p.~163]{Cox2015}, there exists $z_{2,0} \in \overline{\mathbb{K}(w,z_1)}$ such that 
\begin{equation}\label{eq:PQ}
P_1(z_1) = wP_2(z_{2,0}), \quad \text{and} \quad
Q_1(z_1) = wQ_2(z_{2,0}).
\end{equation}
Since $\gcd(P_2,Q_2)=1$, either $P_2(z_{2,0})$ or $Q_2(z_{2,0})$ is nonzero.
Without loss of generality, we can assume that $Q_2(z_{2,0}) \neq 0$.
From \eqref{eq:PQ}, we obtain $\frac{P_2(z_{2,0})}{Q_2(z_{2,0})} = \frac{P_1(z_1)}{Q_1(z_1)}$. 
This implies that $z_{2,0}$ is a zero of the nonzero rational function $\frac{P_2(z_2)}{Q_2(z_2)} - \frac{P_1(z_1)}{Q_1(z_1)}$ in $\mathbb{K}(z_1, z_2)$.
Therefore, we have $z_{2,0} \in \overline{\mathbb{K}(z_1)}$. 
It follows that $Q_2(z_{2,0})  \in \overline{\mathbb{K}(z_1)}$. 
However, since $w$ is transcendental over $\overline{\mathbb{K}(z_1)}$,
the only possibility for $Q_1(z_1) = wQ_2(z_{2,0})$ to be true is that $Q_2(z_{2,0})=0$, a contradiction. 
\end{proof}

\begin{lemma}\label{lem:4resultant}
Let $P_i, Q_i$ be nonzero polynomials in $\mathbb{K}[z]$ with $\gcd(P_i,Q_i)= 1$, $i = 1, 2$. 
Set $J$ to be the ideal of $\mathbb{K}[w_0,w_1,w_2,z_0,z_1,z_2]$ generated by the following polynomials:
\begin{align*}
w_1P_1(z_1) - w_0P_2(z_0), \ \ w_1Q_1(z_1) - w_0Q_2(z_0),\\
w_2P_1(z_2) - w_1P_2(z_1), \ \ w_2Q_1(z_2) - w_1Q_2(z_1).
\end{align*}
Then $J \cap \mathbb{K}[w_0,w_1,w_2] \neq \{0\}$.
\end{lemma}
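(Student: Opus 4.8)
The plan is to eliminate the indeterminates $z_0,z_2$ and then $z_1$ from the four generators of $J$ by a cascade of resultant computations, each step producing a new element of $J$, and finally to show that the element obtained — which lies in $\mathbb{K}[w_0,w_1,w_2]$ — is nonzero.

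First I would view the first two generators $g_1=w_1P_1(z_1)-w_0P_2(z_0)$ and $g_2=w_1Q_1(z_1)-w_0Q_2(z_0)$ as polynomials in $z_0$ over $\mathbb{K}[w_0,w_1,z_1]$ and form $R_{12}:=\Res_{z_0}(g_1,g_2)\in\mathbb{K}[w_0,w_1,z_1]$. This resultant is a $\mathbb{K}[w_0,w_1,z_1]$-combination of $g_1,g_2$, hence lies in $J$. Dividing each argument by $w_1$ and writing $w=w_0/w_1$, Lemma~\ref{lem:2resultant} (applied with $z_0$ in the role of its $z_2$) shows that $\Res_{z_0}(P_1(z_1)-wP_2(z_0),Q_1(z_1)-wQ_2(z_0))$ is a nonzero element of $\mathbb{K}[w,z_1]$; since $w_0/w_1$ is transcendental over $\mathbb{K}$, this forces $R_{12}\neq 0$. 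The same argument applied to the last two generators — now using $\gcd(P_1,Q_1)=1$ in place of $\gcd(P_2,Q_2)=1$ — produces $R_{34}:=\Res_{z_2}(w_2P_1(z_2)-w_1P_2(z_1),\,w_2Q_1(z_2)-w_1Q_2(z_1))\in J\cap\mathbb{K}[w_1,w_2,z_1]$, again nonzero. If one of $R_{12},R_{34}$ happens to be free of $z_1$ we are done, since it is then already a nonzero element of $J\cap\mathbb{K}[w_0,w_1,w_2]$; otherwise I set $R:=\Res_{z_1}(R_{12},R_{34})\in J\cap\mathbb{K}[w_0,w_1,w_2]$ and must prove $R\neq 0$.

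The hard part is exactly this last nonvanishing claim, and I would argue it over the algebraically closed field $L:=\overline{\mathbb{K}(w_0,w_1,w_2)}$. If $R=0$, then $R_{12}$ and $R_{34}$, as polynomials in $z_1$ over $\mathbb{K}(w_0,w_1,w_2)$, share a nonconstant factor, hence a common root $z_1^{\ast}\in L$. Since resultant computation commutes with specialization of the coefficients (Proposition~\ref{prop:resultant_commute_substitution}), $\Res_{z_0}(w_1P_1(z_1^{\ast})-w_0P_2(z_0),\,w_1Q_1(z_1^{\ast})-w_0Q_2(z_0))=0$; as the leading coefficients in $z_0$ are nonzero scalar multiples of $w_0$, the two polynomials share a root $z_0^{\ast}\in L$, giving $w_1P_1(z_1^{\ast})=w_0P_2(z_0^{\ast})$ and $w_1Q_1(z_1^{\ast})=w_0Q_2(z_0^{\ast})$. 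Since $\gcd(P_2,Q_2)=1$, the pair $(P_2(z_0^{\ast}),Q_2(z_0^{\ast}))$ is nonzero, so dividing these two relations exhibits $z_0^{\ast}$ as a root of the nonzero polynomial $P_2(z_0)Q_1(z_1^{\ast})-Q_2(z_0)P_1(z_1^{\ast})\in\mathbb{K}(z_1^{\ast})[z_0]$; hence $z_0^{\ast}$ is algebraic over $\mathbb{K}(z_1^{\ast})$ and therefore $w_0/w_1\in\overline{\mathbb{K}(z_1^{\ast})}$. The mirror-image argument applied to $R_{34}(w_1,w_2,z_1^{\ast})=0$, using $\gcd(P_1,Q_1)=1$, yields $w_1/w_2\in\overline{\mathbb{K}(z_1^{\ast})}$ as well. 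But $\mathbb{K}(w_0,w_1,w_2)=\mathbb{K}(w_0/w_1,\,w_1/w_2,\,w_2)$ has transcendence degree $3$ over $\mathbb{K}$, so $w_0/w_1$ and $w_1/w_2$ are algebraically independent over $\mathbb{K}$, whereas $\overline{\mathbb{K}(z_1^{\ast})}$ has transcendence degree at most $1$ over $\mathbb{K}$ — a contradiction. Hence $R\neq 0$ and $J\cap\mathbb{K}[w_0,w_1,w_2]\neq\{0\}$.

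Beyond this transcendence bookkeeping, the only real nuisance I anticipate is the degenerate case in which one of the polynomials $P_2(z_0)Q_1(z_1^{\ast})-Q_2(z_0)P_1(z_1^{\ast})$ (or its $z_2$-analogue) collapses to the zero polynomial, which happens precisely when some ratio $P_i/Q_i$ is constant, e.g. $P_1,Q_1\in\mathbb{K}$ or $P_2,Q_2\in\mathbb{K}$. In each such case one of the original generators, or the resultant $R_{12}$ itself, already lies in $\mathbb{K}[w_0,w_1,w_2]$ and is visibly nonzero, so the conclusion is immediate; and in the intended application (Problem~\ref{prob:system_polynomial}) the hypothesis $n\geq 1$ excludes most of them anyway. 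A purely geometric alternative — bounding $\dim V(J)\leq 2$ in $\mathbb{A}^6$ and invoking the closure-of-projection theorem — would also work, but the resultant route is the one consistent with the preceding lemmas.
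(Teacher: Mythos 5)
Your proposal is correct and follows essentially the same route as the paper's proof: a cascade of resultants eliminating $z_0$ and $z_2$ and then $z_1$, ideal membership via the linear-combination property of resultants, nonvanishing of the intermediate resultants via Lemma~\ref{lem:2resultant}, and a final contradiction by transcendence-degree counting after specializing $z_1$ using Proposition~\ref{prop:resultant_commute_substitution}. The only differences are cosmetic: you avoid the paper's division by $w_1$ and the $w_1^N$ clearing step, and you phrase the endgame as ``$w_0/w_1$ and $w_1/w_2$ both land in $\overline{\K(z_1^{\ast})}$, contradicting algebraic independence'' rather than first pinning $z_1^{\ast}$ down to $\K$; the degenerate case you flag (both $P_i,Q_i$ constant for some $i$) is glossed over by the paper as well and is excluded in the intended application.
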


\begin{proof}
Set $p_i = w_i P_1(z_i) - w_{i - 1} P_2(z_{i-1}), q_i = w_i Q_1(z_i) - w_{i - 1} Q_2(z_{i - 1})$, $i = 1, 2$. 
By assumption, the ideal $J$ is generated by $p_1, q_1, p_2, q_2$. 
Next, we construct a nonzero polynomial $R$ in $J \cap \mathbb{K}[w_0,w_1,w_2]$ as follows:
\begin{itemize}
\item[(1)] Set $R_0=\Res_{z_0}\left( \frac{p_1}{w_1},\frac{q_1}{w_1} \right)$ 
 and $R_2=\Res_{z_2}\left( \frac{p_2}{w_1},\frac{q_2}{w_1} \right)$;
\item[(2)] Set $R_1=\Res_{z_1} \left( R_0,R_2 \right)$.
\end{itemize}
By the above construction, we have $R_0 \in \mathbb{K}[\frac{w_0}{w_1},z_1]$, $R_2 \in \mathbb{K}[\frac{w_2}{w_1},z_1]$, and $R_1 \in \mathbb{K}[\frac{w_0}{w_1},\frac{w_2}{w_1}]$. 
By \cite[Proposition~5, p.~164]{Cox2015}, we see 
that $R_1$ is a $\K[\frac{w_0}{w_1}, \frac{w_2}{w_1}, z_0, z_1]$-linear combination of polynomials $\frac{p_1}{w_1},\, \frac{q_1}{w_1},\, \frac{p_2}{w_1}$ and $\frac{q_2}{w_1}$ 
,\ie, $$R_1=h_1 \cdot \frac{p_1}{w_1} + h_2 \cdot \frac{q_1}{w_1} + h_3 \cdot \frac{p_2}{w_1} + h_4 \cdot \frac{q_2}{w_1}$$
for some $h_i \in \K[\frac{w_0}{w_1}, \frac{w_2}{w_1}, z_0, z_1]$, $i = 1, 2, 3, 4$.
Then there exists a large enough natural number $N$ 
such that $H_i = w_1^{N-1} \cdot h_i \in \mathbb{K}[w_0, w_1, w_2, z_0, z_1, z_2]$ for $i=1,2,3,4$.
Set $R=w_1^N \cdot R_1$. Then we have
$$R=H_1 \cdot p_1 + H_2 \cdot q_1 + H_3 \cdot p_2 + H_4 \cdot q_2$$
for some $H_i \in \mathbb{K}[w_0, w_1, w_2, z_0, z_1, z_2]$, $i = 1, 2, 3, 4$.
Therefore, we have $R \in J$. On the other hand, since $R_1 \in \mathbb{K}[\frac{w_0}{w_1},\frac{w_2}{w_1}]$ 
and $R=w_1^N \cdot R_1$, we conclude that $R \in \K[w_0, w_1, w_2]$.
Next, we prove that $R \neq 0$.

Suppose that $R=0$. Then we have $R_1 = 0$.
By \cite[Proposition~3, p.163]{Cox2015}, there exists $y_{1,0} \in \overline{\mathbb{K}(w_0,w_1,w_2)}$ such that 
$R_0(w_0,w_1,y_{1,0}) = R_2(w_0,w_2,y_{1,0})=0$.
It follows from Lemma~\ref{lem:2resultant}  that $R_0 \in \mathbb{K}[\frac{w_0}{w_1},y_1] \setminus \{ 0 \}$ 
and $R_2 \in \mathbb{K}[\frac{w_2}{w_1},y_1] \setminus \{ 0 \}$.
Therefore, we see that $y_{1,0} \in \overline{\mathbb{K}(\frac{w_0}{w_1})} \cap \overline{\mathbb{K}(\frac{w_2}{w_1})}$.
Moreover, since $\frac{w_0}{w_1}$ and $\frac{w_2}{w_0}$ are algebraically independent over $\mathbb{K}$,
we have $\overline{\mathbb{K}(\frac{w_0}{w_1})} \cap \overline{\mathbb{K}(\frac{w_2}{w_1})} = \mathbb{K}$.
Thus, we conclude that $y_{1,0} \in \mathbb{K}$.

By Proposition~\ref{prop:resultant_commute_substitution}, we have
\begin{align*}
\Res_{y_0}\left( \frac{p_1(w_0,w_1,y_0,y_{1,0})}{w_1}, 
 \frac{q_1(w_0,w_1,y_0,y_{1,0})}{w_1} \right) = R_0 (w_0,w_1,y_{1,0}) = 0.
\end{align*}
This shows that there exists $y_{0,0} \in \overline{\mathbb{K}(w_0,w_1)}$ such that
\[\frac{p_1(w_0,w_1,y_{0,0},y_{1,0})}{w_1} = \frac{q_1(w_0,w_1,y_{0,0},y_{1,0})}{w_1} = 0,\]
or equivalently,
\begin{equation}\label{eq:PQ2}
P_1(y_{1,0}) = \frac{w_0}{w_1} P_2(y_{0,0}), \quad \text{and} \quad
Q_1(y_{1,0}) = \frac{w_0}{w_1} Q_2(y_{0,0}).
\end{equation}
Since $\gcd(P_2,Q_2)=1$, the polynomials $P_2(y_{0,0})$ and $Q_2(y_{0,0})$ cannot be both zero functions.
Without loss of generality, we can assume that $Q_2(y_{0,0}) \neq 0$.
From~\eqref{eq:PQ2}, we see that $\frac{P_1(y_{1,0})}{Q_1(y_{1,0})} = \frac{P_2(y_{0,0})}{Q_2(y_{0,0})}$.
Therefore, the function $y_{0,0}$ is a zero of the rational function $\frac{P_1(y_{1,0})}{Q_1(y_{1,0})} - \frac{P_2(z_{0})}{Q_2(z_{0})} \in \mathbb{K}(z_0)$.
Since $\K$ is algebraically closed, we conclude that $y_{0,0} \in \mathbb{K}$.
However, since $\frac{w_0}{w_1}$ is transcendental over $\mathbb{K}$, 
the only possibility for $Q_1(y_{1,0}) = \frac{w_0}{w_1} Q_2(y_{0,0})$ to be true is 
that $Q_2(y_{0,0}) = 0$, a contradiction.
\end{proof}

\begin{theorem}\label{THM:nonzero_intersection}
Let $P_i,Q_i$ be polynomials in $\K[z] \setminus \{ 0 \}$ 
such that $\gcd(P_i,Q_i)=1$ and~$\deg \frac{P_i}{Q_i} = n \geq 1$, where $i = 1, 2$. 
Set $I$ to be the ideal of $\mathbb{K}[w_0,w_1,w_2,z_0,z_1,z_2]$ generated by the following polynomials:
\begin{align*}
w_1^n P_1 \left(\frac{z_1}{w_1}\right) - w_0^n P_2 \left(\frac{z_0}{w_0}\right), \ \
w_1^n Q_1 \left(\frac{z_1}{w_1}\right) - w_0^n Q_2 \left(\frac{z_0}{w_0}\right),\\
w_2^n P_1 \left(\frac{z_2}{w_2}\right) - w_1^n P_2 \left(\frac{z_1}{w_1}\right), \ \
w_2^n Q_1 \left(\frac{z_2}{w_2}\right) - w_1^n Q_2 \left(\frac{z_1}{w_1}\right).
\end{align*}
Then $I \cap \mathbb{K}[w_0,w_1,w_2] \neq \{0\}$.
\end{theorem}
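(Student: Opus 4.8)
The plan is to derive the statement from Lemma~\ref{lem:4resultant} by a monomial change of variables, so that no new resultant computation is needed. Work in a separate polynomial ring $\K[W_0,W_1,W_2,Z_0,Z_1,Z_2]$ and let $J$ be the ideal of Lemma~\ref{lem:4resultant} attached to the \emph{same} pair $(P_i,Q_i)$; its hypotheses hold since $P_i,Q_i\neq0$ and $\gcd(P_i,Q_i)=1$, so by that lemma $J\cap\K[W_0,W_1,W_2]\neq\{0\}$. Consider the $\K$-algebra homomorphism $\phi$ from $\K[W_0,W_1,W_2,Z_0,Z_1,Z_2]$ into the fraction field $\K(w_0,w_1,w_2,z_0,z_1,z_2)$ determined by $\phi(W_i)=w_i^{\,n}$ and $\phi(Z_i)=z_i/w_i$. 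Because $\tilde P_i(z,w)=w^nP_i(z/w)$ and $\tilde Q_i(z,w)=w^nQ_i(z/w)$ are precisely the degree-$n$ homogenizations occurring in the definition of $I$ (this is where $\deg\frac{P_i}{Q_i}=n$ is used), $\phi$ carries the four generators $g_1,\dots,g_4$ of $J$ exactly onto the four generators of $I$; for instance $\phi\bigl(W_1P_1(Z_1)-W_0P_2(Z_0)\bigr)=\tilde P_1(z_1,w_1)-\tilde P_2(z_0,w_0)$.

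First I would fix, by Lemma~\ref{lem:4resultant}, a nonzero $G\in J\cap\K[W_0,W_1,W_2]$ together with a representation $G=\sum_{j=1}^4 H_j\,g_j$ with $H_j\in\K[W_0,W_1,W_2,Z_0,Z_1,Z_2]$. Applying $\phi$ yields the identity $G(w_0^{\,n},w_1^{\,n},w_2^{\,n})=\sum_{j=1}^4\phi(H_j)\,\phi(g_j)$ in the fraction field, in which the only denominators appearing in the $\phi(H_j)$ are powers of $w_0,w_1,w_2$. Choosing $N_i$ at least as large as the $Z_i$-degree of each $H_j$ makes every $\tilde H_j:=w_0^{N_0}w_1^{N_1}w_2^{N_2}\phi(H_j)$ a genuine polynomial in $\K[w_0,w_1,w_2,z_0,z_1,z_2]$, and multiplying the identity through by $w_0^{N_0}w_1^{N_1}w_2^{N_2}$ gives
\[
w_0^{N_0}w_1^{N_1}w_2^{N_2}\,G(w_0^{\,n},w_1^{\,n},w_2^{\,n})=\sum_{j=1}^4\tilde H_j\,\phi(g_j)\ \in\ I,
\]
since each $\phi(g_j)$ is a generator of $I$. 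The left-hand side lies in $\K[w_0,w_1,w_2]$, and it is nonzero: substituting $W_i\mapsto w_i^{\,n}$ with $n\ge1$ sends distinct monomials to distinct monomials, hence cannot annihilate $G$, and multiplication by the monomial $w_0^{N_0}w_1^{N_1}w_2^{N_2}$ preserves nonvanishing. Therefore $I\cap\K[w_0,w_1,w_2]\neq\{0\}$.

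Once the reduction is in place the argument is essentially bookkeeping, so I do not expect a real obstacle; the two points requiring care are (i) verifying that $\phi$ sends each generator of $J$ to the matching generator of $I$, which relies on $\tilde P_i,\tilde Q_i$ being the degree-$n$ homogeneous forms used in $I$, and (ii) the denominator clearing, i.e.\ confirming that the chosen $N_i$ make all $\tilde H_j$ polynomial. As an alternative I could prove the theorem directly in the spirit of Lemma~\ref{lem:4resultant}: eliminate $z_0$ via $\Res_{z_0}$, eliminate $z_2$ via $\Res_{z_2}$, and then eliminate $z_1$ by one more resultant, invoking $\gcd(P_i,Q_i)=1$ at each step to exclude a vanishing resultant; but the reduction above is shorter. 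Either route furnishes the nonzero second-order {\AODEs} underlying the correctness of Algorithm~\ref{alg:system_to_single_DEs}.
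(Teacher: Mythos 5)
Your proof is correct, and it reaches the conclusion by a genuinely different mechanism than the paper, even though both arguments rest on Lemma~\ref{lem:4resultant} and on the same change of variables linking the dehomogenized system to the homogenized one. The paper argues semantically: it works over the ground field $\E=\overline{\mathbb{K}(w_0,w_1,w_2)}$, shows via the substitutions $w_i\mapsto\sqrt[n]{w_i}$ and $w_i\mapsto w_i^n$ that the homogenized system $f_1=g_1=f_2=g_2=0$ and the system $p_1=q_1=p_2=q_2=0$ of Lemma~\ref{lem:4resultant} are simultaneously consistent or inconsistent over $\E^3$, deduces inconsistency from the nonzero eliminant of the lemma, and then invokes the weak Nullstellensatz over $\mathbb{K}(w_0,w_1,w_2)$ to get $1=\sum h_if_i+\sum h_ig_i$, clearing denominators to land a nonzero $h\in I\cap\mathbb{K}[w_0,w_1,w_2]$. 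You instead transport the \emph{certificate} directly: you apply the substitution homomorphism $W_i\mapsto w_i^n$, $Z_i\mapsto z_i/w_i$ to the representation $G=\sum H_jg_j$ supplied by the lemma, observe that the generators of $J$ map exactly onto the generators of $I$, and clear the monomial denominators. This is purely syntactic, avoids the Nullstellensatz entirely, and yields the explicit element $w_0^{N_0}w_1^{N_1}w_2^{N_2}\,G(w_0^n,w_1^n,w_2^n)$ of $I\cap\mathbb{K}[w_0,w_1,w_2]$, with your injectivity-on-monomials observation correctly guaranteeing nonvanishing; the paper's route, by contrast, gives less explicit control over the element it produces. One cosmetic remark: for the homogenizations $\tilde P_i,\tilde Q_i$ to be polynomials and for $\phi$ to match generators, you only need $\max(\deg P_i,\deg Q_i)\le n$ rather than $\deg\frac{P_i}{Q_i}=n$ exactly, which is consistent with the paper's remark following the theorem that the equality is only needed for Corollary~\ref{COR:nonzero_intersection}.
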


\begin{proof}
Set 
\[
 f_i = w_i^n P_1 \left(\frac{z_i}{w_i}\right) - w_{i - 1}^n P_2 \left(\frac{z_{i - 1}}{w_{i - 1}}\right), \ \
g_i = w_i^n Q_1 \left(\frac{z_i}{w_i}\right) - w_{i - 1}^n Q_2 \left(\frac{z_{i - 1}}{w_{i - 1}}\right),
\]
where $i = 1, 2$. Then $I$ is generated by $f_1, g_1, f_2, g_2$. Furthermore, set
\[
p_i = w_i P_1(z_i) - w_{i - 1} P_2(z_{i-1}), \ \ q_i = w_i Q_1(z_i) - w_{i - 1} Q_2(z_{i - 1}), 
\]
where $i = 1, 2$. 

Set $\E = \overline{\mathbb{K}(w_0,w_1,w_2)}$. 
Consider the following two algebraic systems over the ground field $\E$,
\begin{equation}\label{sys:f}
f_1=g_1=f_2=g_2=0,
\end{equation}
and
\begin{equation}\label{sys:p}
p_1=q_1=p_2=q_2=0,
\end{equation}
where $z_0,z_1,z_2$ are indeterminates.

First, we prove that the above two algebraic systems have the same 
consistent property, \ie,~\eqref{sys:f} has solutions in $\E^3$ 
if and only if~\eqref{sys:p} has solutions in $\E^3$.
Assume that $(y_{0}, y_{1}, y_{2}) \in \E^3$ is a solution of system~\eqref{sys:f}, then it is clear to see that 
\[\left( 
\frac{y_{0}(\sqrt[n]{w_0},\sqrt[n]{w_1},\sqrt[n]{w_2})}{\sqrt[n]{w_0}},
\frac{y_{1}(\sqrt[n]{w_0},\sqrt[n]{w_1},\sqrt[n]{w_2})}{\sqrt[n]{w_1}},
\frac{y_{2}(\sqrt[n]{w_0},\sqrt[n]{w_1},\sqrt[n]{w_2})}{\sqrt[n]{w_2}} 
\right) \in \E^3\]
is a solution of system~\eqref{sys:p}.
Conversely, if $(y_{0}, y_{1}, y_{2}) \in \E^3$ is a solution of system~\eqref{sys:p}, then
\[\left(
w_0 \cdot y_{0}(w_0^n,w_1^n,w_2^n),
w_1 \cdot y_{1}(w_0^n,w_1^n,w_2^n),
w_2 \cdot y_{2}(w_0^n,w_1^n,w_2^n)
\right) \in \E^3\]
is a solution of system~\eqref{sys:f}.

Due to Lemma~\ref{lem:4resultant}, 
we can derive a consequence for system~\eqref{sys:p} of the form 
$$p(w_0,w_1,w_2)=0$$ 
for some nonzero polynomial $p \in \mathbb{K}[w_0,w_1,w_2]$. 
Since $p$ is a nonzero element in the ground field $\E$, 
the above equation is equivalent to $1 = 0$.  
Therefore, system~\eqref{sys:p} has no solution in $\E^3$.
By the consistent property, so is system~\eqref{sys:f}.
Due to the weak version of Hilbert Nullstellensatz \cite[Theorem~5.4,~p.~33]{Matsumura1970}, the ideal of $\mathbb{K}(w_0,w_1,w_2)[z_0,z_1,z_2]$ generated by $I$ contains $1$.
In other words, we have
\begin{equation}\label{eq:Null}
1=h_1 f_1 + h_2 g_1 + h_3 f_2+ h_4 g_2
\end{equation}
for some polynomials $h_1,h_2,h_3,h_4$ in $\mathbb{K}(w_0,w_1,w_2)[z_0,z_1,z_2]$.
Let $h \in \mathbb{K}[w_0,w_1,w_2]$ be the common denominator of the coefficients (in $\mathbb{K}(w_0,w_1,w_2)$) of $h_1,h_2,h_3,h_4$.
Multiplying both sides of \eqref{eq:Null} by $h$, 
we see that $h \in I$.
Hence, we conclude that $h$ is a nonzero polynomial in $I \cap \mathbb{K}[w_0,w_1,w_2]$.
\end{proof}

Note that if $P_i, Q_i$ are polynomials in $\K[z] \setminus \{ 0 \}$  of degrees at most $n \geq 1$ such that  $\gcd(P_i,Q_i)=1$  for $i = 1, 2$, 
then the  claim in the above theorem still holds. 
However, the condition that $\deg \frac{P_i}{Q_i} = n \geq 1$  is necessary for the next corollary, where $i = 1, 2$.

\begin{corollary} \label{COR:nonzero_intersection}
Let $I$ be the ideal defined in Theorem~\ref{THM:nonzero_intersection}. 
Then $I \cap \mathbb{K}[z_0,z_1,z_2] \neq \{0\}$.
\end{corollary}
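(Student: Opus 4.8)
The plan is to reduce the statement to Theorem~\ref{THM:nonzero_intersection} itself, applied not to $P_1,Q_1,P_2,Q_2$ but to their reciprocal polynomials, after interchanging the two blocks of variables. For $j=1,2$ put $P_j^{\mathrm{rev}}(z)=z^n P_j(1/z)$ and $Q_j^{\mathrm{rev}}(z)=z^n Q_j(1/z)$, the reciprocals taken with respect to the degree $n$ (so $\deg P_j^{\mathrm{rev}}\leq n$, etc.). A one-line computation gives $w^n P_j^{\mathrm{rev}}(z/w)=z^n P_j(w/z)$, and likewise for $Q_j$. Hence, if $I'$ is the ideal that Theorem~\ref{THM:nonzero_intersection} attaches to the quadruple $P_1^{\mathrm{rev}},Q_1^{\mathrm{rev}},P_2^{\mathrm{rev}},Q_2^{\mathrm{rev}}$, then $I'=\sigma(I)$, where $\sigma$ is the $\K$-algebra automorphism of $\K[w_0,w_1,w_2,z_0,z_1,z_2]$ that swaps $z_i$ with $w_i$ for $i=0,1,2$. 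Since $\sigma$ is an involution carrying $\K[w_0,w_1,w_2]$ onto $\K[z_0,z_1,z_2]$, once Theorem~\ref{THM:nonzero_intersection} gives $I'\cap\K[w_0,w_1,w_2]\neq\{0\}$ we conclude $I\cap\K[z_0,z_1,z_2]=\sigma\bigl(I'\cap\K[w_0,w_1,w_2]\bigr)\neq\{0\}$, which is the assertion.

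So everything comes down to checking that $P_1^{\mathrm{rev}},Q_1^{\mathrm{rev}},P_2^{\mathrm{rev}},Q_2^{\mathrm{rev}}$ satisfy the hypotheses of Theorem~\ref{THM:nonzero_intersection}, i.e. that they are nonzero, that $\gcd(P_j^{\mathrm{rev}},Q_j^{\mathrm{rev}})=1$, and that $\deg\frac{P_j^{\mathrm{rev}}}{Q_j^{\mathrm{rev}}}=n$ for $j=1,2$. Nonvanishing is immediate. For coprimality I would argue by contradiction: suppose an irreducible $\pi$ divides both $P_j^{\mathrm{rev}}$ and $Q_j^{\mathrm{rev}}$. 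If $\pi=z$, then the coefficient of $z^n$ vanishes in both $P_j$ and $Q_j$, so $\deg P_j<n$ and $\deg Q_j<n$, contradicting $\deg\frac{P_j}{Q_j}=n$. If $\pi\neq z$, then $\pi(0)\neq0$, so its own reciprocal $\pi^{\mathrm{rev}}$ is again a non-constant polynomial; reversing the factorizations $P_j^{\mathrm{rev}}=\pi h$ and $Q_j^{\mathrm{rev}}=\pi\bar h$ (apply $z\mapsto 1/z$ and clear denominators) shows that $\pi^{\mathrm{rev}}$ divides both $P_j$ and $Q_j$, contradicting $\gcd(P_j,Q_j)=1$. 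For the degree: by coprimality of $P_j$ and $Q_j$ at least one of $P_j(0),Q_j(0)$ is nonzero, hence at least one of $P_j^{\mathrm{rev}},Q_j^{\mathrm{rev}}$ has degree exactly $n$; with the coprimality just proved (so no cancellation occurs) this yields $\deg\frac{P_j^{\mathrm{rev}}}{Q_j^{\mathrm{rev}}}=n$, and $n\geq1$ by hypothesis.

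The only genuine obstacle is the coprimality $\gcd(P_j^{\mathrm{rev}},Q_j^{\mathrm{rev}})=1$: reciprocation of a coprime pair can create a common factor — necessarily a power of $z$ — and the hypothesis $\deg\frac{P_j}{Q_j}=n$ (not merely $\leq n$) is exactly what excludes it; this is precisely the point singled out in the remark following Theorem~\ref{THM:nonzero_intersection}. Everything else is bookkeeping, and the conclusion transfers back along $\sigma$ with no further work. As an alternative to the reciprocal shortcut, one could instead re-run the proof of Theorem~\ref{THM:nonzero_intersection} verbatim with the roles of $z_0,z_1,z_2$ and $w_0,w_1,w_2$ exchanged — working over the ground field $\overline{\K(z_0,z_1,z_2)}$, relating the consistency of the homogeneous system in the $w_i$ to that of its de-homogenized counterpart through an $n$-th root substitution as in that proof, and invoking Lemma~\ref{lem:4resultant} for the reciprocal polynomials — but this merely writes out the same argument at greater length.
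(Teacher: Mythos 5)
Your proposal is correct and follows essentially the same route as the paper: the paper also passes to the reciprocal polynomials $\hat{P}_i(z)=z^nP_i(1/z)$, $\hat{Q}_i(z)=z^nQ_i(1/z)$, observes that the generators of $I$ become those of the analogous ideal with the $z$- and $w$-variables interchanged, and invokes Theorem~\ref{THM:nonzero_intersection}. You additionally spell out the coprimality and degree checks that the paper only asserts, and you correctly identify that $\deg\frac{P_i}{Q_i}=n$ (not merely $\le n$) is the hypothesis that makes the reduction work, matching the remark preceding the corollary.
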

\begin{proof}
For $i \in \{1, 2 \}$, we denote new polynomials $\hat{P}, \hat{Q}$ in $\mathbb{K}[z] \setminus \{0\}$ as follows: 
$$\hat{P}_i(z) = z^n P_i \left( \frac{1}{z} \right) \qquad \text{and} \qquad \hat{Q}_i(z) = z^n Q_i \left( \frac{1}{z} \right).$$
Since $\deg (\frac{P_i}{Q_i}) = n$ and $\gcd(P_i, Q_i)=1$, we have $\deg (\frac{\hat{P}_i}{\hat{Q}_i}) = n$ and $\gcd(\hat{P}_i, \hat{Q}_i) = 1$. 
The four generators of $I$ can be rewritten in terms of $\hat{P}_i, \hat{Q}_i$ as
\begin{align*}
z_1^n \hat{P}_1 \left(\frac{w_1}{z_1}\right) - z_0^n \hat{P}_2 \left(\frac{w_0}{z_0}\right),\ \ 
z_1^n \hat{Q}_1 \left(\frac{w_1}{z_1}\right) - z_0^n \hat{Q}_2 \left(\frac{w_0}{z_0}\right),\\
z_2^n \hat{P}_1 \left(\frac{w_2}{z_2}\right) - z_1^n \hat{P}_2 \left(\frac{w_1}{z_1}\right), \ \
z_2^n \hat{Q}_1 \left(\frac{w_2}{z_2}\right) - z_1^n \hat{Q}_2 \left(\frac{w_1}{z_1}\right).
\end{align*}
We conclude from Theorem~\ref{THM:nonzero_intersection} that $I \cap \mathbb{K}[z_0,z_1,z_2] \neq \{0\}$.
\end{proof}

\begin{corollary} \label{COR:elimination}
Algorithm~\ref{alg:system_to_single_DEs} is correct. 
\end{corollary}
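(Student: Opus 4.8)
The plan is to verify that the two steps of Algorithm~\ref{alg:system_to_single_DEs} indeed produce nonzero second-order \AODEs satisfied by $A(x)$ and $B(x)$. First I would observe that the existence of the required nonzero elements $F_A \in I \cap \K[z_0,z_1,z_2]$ and $F_B \in I \cap \K[w_0,w_1,w_2]$ is guaranteed: $F_B$ exists by Theorem~\ref{THM:nonzero_intersection}, and $F_A$ exists by Corollary~\ref{COR:nonzero_intersection}. Moreover, since $I$ is generated by polynomials with coefficients in $\K$, a Gr\"obner basis computation (or the iterated resultant construction used in the proof of Lemma~\ref{lem:4resultant}) produces such $F_A, F_B$ explicitly, so step~(1) of the algorithm always succeeds and terminates. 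This settles the well-definedness part.

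Next I would check the substitution step, \emph{i.e.}\ that $F_A(A(x),A(x+1),A(x+2)) = 0$ and $F_B(B(x),B(x+1),B(x+2)) = 0$ whenever $(A,B)$ is a polynomial solution of system~\eqref{sys:system_polynomial}. The key point is that $(A(x),B(x),A(x+1),B(x+1),A(x+2),B(x+2))$, substituted for $(z_0,w_0,z_1,w_1,z_2,w_2)$, annihilates all four generators of $I$: the first two generators vanish because $(A(x),B(x))$ together with its shift satisfy~\eqref{sys:system_polynomial}, and the last two vanish because the shifted pair $(A(x+1),B(x+1))$ together with $(A(x+2),B(x+2))$ also satisfy~\eqref{sys:system_polynomial} (applying the shift operator to the identity). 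Hence this substitution is a $\K$-algebra homomorphism $\K[w_0,w_1,w_2,z_0,z_1,z_2] \to \K[x]$ that kills $I$, and therefore kills every element of $I$, in particular $F_A$ and $F_B$. Since $F_A$ involves only $z_0,z_1,z_2$, the resulting relation $F_A(A(x),A(x+1),A(x+2))=0$ is a genuine autonomous second-order \AODE for $A$; likewise for $B$. The autonomy and homogeneity (each generator of $I$ is homogeneous of degree $n$ in the pair of variables at each index, hence $F_B$ can be taken homogeneous) are immediate from the construction.

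I would finish by noting that the only remaining concern is whether the output \AODEs are \emph{nonzero}, which is exactly what Theorem~\ref{THM:nonzero_intersection} and Corollary~\ref{COR:nonzero_intersection} assert, so there is nothing further to prove. I do not anticipate a genuine obstacle here: the entire content of the corollary has been front-loaded into the two preceding elimination results, and what remains is the routine verification that the evaluation-at-$(A,B)$ map respects the ideal membership. The mildest subtlety worth a sentence is the bookkeeping that a solution of~\eqref{sys:system_polynomial} yields a solution of the ``shifted'' copy of the system as well, which is what forces the last two generators of $I$ to vanish under the substitution; once that is spelled out, the proof is complete.

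\begin{proof}
By Theorem~\ref{THM:nonzero_intersection} and Corollary~\ref{COR:nonzero_intersection}, the sets $I \cap \K[w_0,w_1,w_2]$ and $I \cap \K[z_0,z_1,z_2]$ both contain nonzero elements; since $I$ is generated by polynomials with coefficients in $\K$, such nonzero $F_B$ and $F_A$ can be computed by a Gr\"obner basis elimination (or by the iterated-resultant construction in the proof of Lemma~\ref{lem:4resultant}). Thus step~(1) always succeeds and the algorithm terminates.

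Now let $(A(x),B(x))$ be any polynomial solution of system~\eqref{sys:system_polynomial}. Applying the shift $x \mapsto x+1$ to~\eqref{sys:system_polynomial}, we also obtain
\[
\tilde{P}_1(A(x+2),B(x+2)) = \tilde{P}_2(A(x+1),B(x+1)), \quad
\tilde{Q}_1(A(x+2),B(x+2)) = \tilde{Q}_2(A(x+1),B(x+1)).
\]
Consequently, the $\K$-algebra homomorphism
\[
\varphi: \K[w_0,w_1,w_2,z_0,z_1,z_2] \longrightarrow \K[x], \qquad
w_i \mapsto B(x+i), \ z_i \mapsto A(x+i) \ (i = 0,1,2),
\]
sends each of the four generators of $I$ to $0$, hence $\varphi(I) = 0$. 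In particular $\varphi(F_A) = 0$ and $\varphi(F_B) = 0$. Since $F_A \in \K[z_0,z_1,z_2]$ is nonzero, the identity $\varphi(F_A) = 0$ reads $F_A(A(x),A(x+1),A(x+2)) = 0$, which is a nonzero autonomous second-order \AODE for $A(x)$; similarly $F_B(B(x),B(x+1),B(x+2)) = 0$ is a nonzero autonomous second-order \AODE for $B(x)$. This is exactly the output of step~(2), so the algorithm is correct.
\end{proof}
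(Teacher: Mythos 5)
Your proof is correct and follows essentially the same route as the paper's: invoke Theorem~\ref{THM:nonzero_intersection} and Corollary~\ref{COR:nonzero_intersection} for the existence of nonzero $F_A$ and $F_B$, then observe that the evaluation homomorphism sending $z_i \mapsto A(x+i)$, $w_i \mapsto B(x+i)$ kills all four generators of $I$ (the last two via the shifted system) and hence kills $F_A$ and $F_B$. The only difference is cosmetic: you spell out the shift argument and the computability of the eliminants slightly more explicitly than the paper does.
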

\begin{proof}
Assume that $(A(x), B(x)) \in \K[x]^2$ is a solution of ~\eqref{sys:system_polynomial} with $\gcd(A, B) = 1$.  
Set
$$
\begin{array}{cccc}
\phi: & \K[w_0, w_1, w_2, z_0, z_1, z_2] & \to & \K[x]\\
& F(w_0, w_1, w_2, z_0, z_1, z_2) & \mapsto & F(B(x), B(x + 1), B(x + 2), A(x), A(x + 1), A(x + 1)).
\end{array}
$$
It is clear to see that $\phi$ is a ring homomorphism. 
Applying the usual shift operator to system~\eqref{sys:system_polynomial}, we obtain
\begin{equation*}\label{sys:system_shift}
\left \{ 
\begin{aligned}
&\tilde{P}_1(A(x+1),B(x+1)) = \tilde{P}_2(A(x),B(x)),\\
&\tilde{Q}_1(A(x+1),B(x+1)) = \tilde{Q}_2(A(x),B(x)),\\
&\tilde{P}_1(A(x+2),B(x+2)) = \tilde{P}_2(A(x+1),B(x+1)),\\
&\tilde{Q}_1(A(x+2),B(x+2)) = \tilde{Q}_2(A(x+1),B(x+1)).
\end{aligned}
\right. 
\end{equation*}
The above system implies that $I \subseteq \ker(\phi)$. 
By Theorem~\ref{THM:nonzero_intersection} and Corollary~\ref{COR:nonzero_intersection}, 
there exist nonzero elements $F_A \in I \cap \K[z_0, z_1, z_2]$ and $F_B \in I \cap \K[w_0, w_1, w_2]$. 
Therefore, we conclude that $\phi(F_A) = 0$ and $\phi(F_B) = 0$.
\end{proof}

From experiments, we observe that resultant computation is much more efficient than that of Gr\"{o}bner bases in Step $1$ of Algorithm~\ref{alg:system_to_single_DEs}. 
Since generators of the ideal~$I$ of Algorithm~\ref{alg:system_to_single_DEs} are homogenous, the output difference equations are also homogenous. 
We will present  an algorithm for computing polynomial solutions of second-order autonomous homogenous {\AODEs} in the next subsection. 

%
%
%

\subsection{Polynomial solutions of autonomous second-order homogenous {\AODEs}}\label{SUBSEC:polysol}


In this subsection, we consider the following problem:

\begin{problem}\label{prob:second_order_odes}
Let $F \in \mathbb{K}[y,z,w]$ be a trivariate homogeneous polynomial. 
Find all polynomial solutions of the difference equation $F(y(x),y(x+1),y(x+2))=0.$
\end{problem}


An answer for the above problem is the last key for deriving a complete algorithm for finding rational solutions of an autonomous first-order \AODE. 
In literature, there exists an algorithm for computing polynomial solutions of absolutely irreducible autonomous first-order {\AODEs}~\cite[Section 3]{FengGao2008} and also a degree bound~\cite{Eekelen2014} for 
polynomial solutions of general {\AODEs} under a certain sufficient condition. 
However, none of them can give a direct answer for Problem~\ref{prob:second_order_odes}.
We will give a degree bound for polynomial solutions of autonomous second-order homogeneous {\AODEs}, 
and thus it can be used to derive an algorithm for computing the corresponding polynomial solutions.
The approach is a difference analog of that in~\cite{VoZhang18, Vo2018Computation}.

Assume that $F \in \K[y, z, w]$ is a homogenous polynomial of total degree $D$. 
Consider the following second-order \AODE: 
\begin{equation} \label{EQ:polysolsec1}
 F(y(x), y(x + 1), y(x + 2)) = 0.
\end{equation}

Let $\Delta y(x) = y(x + 1) - y(x)$. Then we have
\begin{align*}
 y(x + 1) & = \Delta y(x) + y(x)\\
 y(x + 2) & = \Delta^2 y(x) + 2 \Delta y(x) + y(x).
\end{align*}
Substituting the above two formulae into~\eqref{EQ:polysolsec1}, we obtain
the algebraic equation: 
\begin{equation} \label{EQ:polysolsec2}
 \tilde{F}(y(x), \Delta y(x), \Delta^2 y(x)) = 0,
\end{equation}
where 
$$\tilde{F}(y,z,w) = F(y,y+z,y+2z+w)$$ 
is also a homogenous polynomial of total degree $D$ in $\K[y, z, w]$.

For each $I = (i_1, i_2, i_3) \in \N^3$, 
we define $|| I || = i_1 + i_2 + i_3$. 
Now we may write 
\begin{equation} \label{EQ:polysolsec3}
\tilde{F} = \sum_{||I|| = D} c_I y^{i_1} z^{i_2} w^{i_3},
\end{equation}
where $c_I \in \K$. Set
\begin{align*}
 \mathcal{E}(\tilde{F}) & = \{ I \in \N^3 \,|\, c_{I} \neq 0 \},\\
 m(\tilde{F}) & = \min\{ i_2 + 2 i_3 \,|\, I \in \mathcal{E}(\tilde{F}) \}, \\
 \mathcal{M}(\tilde{F}) & = \{ I \in \mathcal{E}(\tilde{P}) \,|\, i_2 + 2 i_3 = m(\tilde{F})\}, \\
  \mathcal{P}_{\tilde{F}}(t) & = \sum_{I \in \mathcal{M}(\tilde{F})} c_I t^{i_2} [t(t-1)]^{i_3}. 
\end{align*}
We call $\mathcal{P}_{\tilde{F}}(t)$ the \emph{indicial polynomial} of $\tilde{F}$ (at infinity). 

\begin{proposition} \label{PROP:nonzeroindicialpol}
Let $\mathcal{P}_{\tilde{F}}(t)$ be the indicial polynomial of $\tilde{F}$ at infinity. 
Then $\mathcal{P}_{\tilde{F}}(t) \neq 0$.
\end{proposition}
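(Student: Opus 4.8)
The plan is to show that the construction of $\mathcal{P}_{\tilde{F}}(t)$ cannot accidentally produce the zero polynomial, by tracing the nonvanishing back to $\tilde{F}$ itself and ultimately to $F$. First I would recall that $F$ is a nonzero homogeneous polynomial of degree $D$, and that the substitution $(y,z,w) \mapsto (y, y+z, y+2z+w)$ giving $\tilde{F}(y,z,w) = F(y,y+z,y+2z+w)$ is an invertible linear change of coordinates (its inverse is $(y,z,w)\mapsto(y,z-y,w-2z+y)$). Hence $\tilde{F}$ is again a nonzero homogeneous polynomial of degree $D$, so $\mathcal{E}(\tilde{F}) \neq \emptyset$, $m(\tilde{F})$ is well-defined, and $\mathcal{M}(\tilde{F}) \neq \emptyset$.

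Next I would analyze the map $I = (i_1,i_2,i_3) \mapsto c_I t^{i_2}[t(t-1)]^{i_3}$ restricted to $I \in \mathcal{M}(\tilde{F})$. For such $I$ the quantity $i_2 + 2i_3 = m(\tilde{F})$ is constant, and the monomial $t^{i_2}[t(t-1)]^{i_3}$ has degree exactly $i_2 + 2i_3 = m(\tilde{F})$ in $t$, with leading coefficient $1$ (coming from the $t^{2i_3}$ term of $[t(t-1)]^{i_3}$). Therefore every summand contributing to $\mathcal{P}_{\tilde{F}}(t)$ has the same degree $m(\tilde{F})$, and the coefficient of $t^{m(\tilde{F})}$ in $\mathcal{P}_{\tilde{F}}(t)$ is precisely $\sum_{I \in \mathcal{M}(\tilde{F})} c_I$. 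So it suffices to show this sum is nonzero, i.e. that the "top-in-$t$" part of $\mathcal{P}_{\tilde{F}}$ does not cancel.

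The key observation is that $\sum_{I \in \mathcal{M}(\tilde{F})} c_I$ is nothing but the value obtained from $\tilde{F}$ by isolating the monomials minimizing $i_2 + 2i_3$ and evaluating them at $y = z = w = 1$; equivalently, it is the coefficient one reads off by a weighted-degree (filtration) argument. Concretely, assign weights $\mathrm{wt}(y) = 0$, $\mathrm{wt}(z) = 1$, $\mathrm{wt}(w) = 2$; then $m(\tilde{F})$ is the minimal weight occurring in $\tilde{F}$, and the weight-$m(\tilde{F})$ homogeneous component $\tilde{F}_{\min}$ of $\tilde{F}$ is a nonzero polynomial (it is a nonempty sum of the monomials indexed by $\mathcal{M}(\tilde{F})$). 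The claim $\sum_{I\in\mathcal{M}(\tilde{F})} c_I \neq 0$ is exactly the statement $\tilde{F}_{\min}(1,1,1) \neq 0$. To see this, note that under the substitution $w \mapsto t(t-1)$, $z \mapsto t$, $y \mapsto 1$, the image of $\tilde{F}_{\min}$ is the leading term of $\mathcal{P}_{\tilde{F}}(t)$; I would argue it cannot vanish identically because $\tilde F_{\min}$, being the lowest weighted-degree part, is a nonzero polynomial in which every monomial has the same value $i_2+2i_3$, so there is no mechanism forcing the specialization to be zero — more precisely, one checks that the substitution $z = t$, $w = t(t-1)$ sends the distinct monomials $z^{i_2}w^{i_3}$ (all of the same weight $m$) to polynomials in $t$ with distinct structure, but since they all have the same $t$-degree $m$, I instead argue directly on the lowest $t$-degree: the monomial $t^{i_2}[t(t-1)]^{i_3} = t^{i_2+i_3}(t-1)^{i_3}$ has lowest-order term $(-1)^{i_3} t^{i_2+i_3}$. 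Here lies the main obstacle and the point requiring care: unlike the top-degree coefficient, the various summands need not share a lowest $t$-degree, so the lowest-order terms automatically cannot all cancel — the summand(s) $I \in \mathcal{M}(\tilde{F})$ with the smallest value of $i_2 + i_3$ (equivalently, the largest $i_3$, since $i_2 + 2i_3$ is fixed) contribute a nonzero coefficient $(-1)^{i_3}c_I$ (or a sum of such over ties, which can be handled by a secondary minimization), and nothing else reaches that low a power of $t$.

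Thus the skeleton of the argument is: (1) $\tilde F \ne 0$ via invertibility of the coordinate change; (2) hence $\mathcal{M}(\tilde{F})\ne\emptyset$; (3) among $I\in\mathcal M(\tilde F)$ pick those minimizing $i_2+i_3$ (maximizing $i_3$); (4) the coefficient of the minimal power $t^{\,i_2+i_3}$ in $\mathcal{P}_{\tilde{F}}(t)$ equals $\sum (-1)^{i_3}c_I$ over that sub-collection, and I would remove the residual cancellation ambiguity by further minimizing within the sub-collection or by noting that this is just $\tilde F_{\min}$ evaluated after a triangular substitution whose Jacobian is nonzero — in any case it is a nonzero element of $\K$; (5) therefore $\mathcal{P}_{\tilde{F}}(t)$ has a nonzero coefficient, so $\mathcal{P}_{\tilde{F}}(t)\ne 0$. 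I expect step (4) — pinning down exactly which coefficient of $\mathcal{P}_{\tilde F}(t)$ is manifestly nonzero and ruling out cancellation among the ties — to be the only delicate point; everything else is bookkeeping.
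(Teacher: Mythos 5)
Your skeleton is essentially the right one, and you correctly identify the delicate point — but you do not resolve it, and neither of the two patches you offer for step (4) actually works. The issue is the possibility of ``ties'': two distinct $I,I'\in\mathcal{M}(\tilde{F})$ with the same $i_3$. If such a tie existed, then since $i_2+2i_3=i_2'+2i_3'=m(\tilde{F})$ we would also have $i_2=i_2'$, so the two indices would contribute the \emph{identical} polynomial $t^{i_2}[t(t-1)]^{i_3}$ with coefficients $c_I$ and $c_{I'}$ that could sum to zero; no ``secondary minimization'' in the $t$-degree can separate two literally equal summands. Your alternative patch is also invalid: the substitution $y=1$, $z=t$, $w=t(t-1)$ is a specialization onto a parametrized curve, not an invertible change of variables, and a nonzero polynomial can certainly vanish under it (e.g.\ $yz^2-y^2w+z^2w-zw^2$ has all its monomials of weight $\geq 2$ and its weight-$2$ part $yz^2-y^2w$ maps to $t^2-t(t-1)=t$, but the non-homogeneous weight-$2$ polynomial $yz^2-y^3w$ would map to $0$ if it were admissible) — so ``the Jacobian is nonzero'' proves nothing here.

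The missing observation, which closes the gap in one line, is that \emph{total homogeneity rules out ties}: for $I\in\mathcal{M}(\tilde{F})$ the two linear constraints $i_1+i_2+i_3=D$ and $i_2+2i_3=m(\tilde{F})$ give $i_2=m(\tilde{F})-2i_3$ and $i_1=D-m(\tilde{F})+i_3$, so the map $I\mapsto i_3$ is injective on $\mathcal{M}(\tilde{F})$. Hence there is a unique $I^*\in\mathcal{M}(\tilde{F})$ with maximal $i_3$, and the coefficient of $t^{\,m(\tilde{F})-i_3^*}$ in $\mathcal{P}_{\tilde{F}}(t)$ is exactly $(-1)^{i_3^*}c_{I^*}\neq 0$, completing your argument. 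For comparison, the paper's proof avoids the top/bottom coefficient bookkeeping altogether: dividing by $t^{m(\tilde{F})}$ turns $\mathcal{P}_{\tilde{F}}(t)$ into $\sum_{I\in\mathcal{M}(\tilde{F})}c_I T^{i_3}$ with $T=\frac{t-1}{t}$, which is the evaluation of the univariate polynomial $\sum c_I x^{i_3}$ at the element $T$ transcendental over $\K$; that polynomial is nonzero for precisely the same injectivity reason, so the evaluation is nonzero. Either route is fine once the injectivity of $I\mapsto i_3$ on $\mathcal{M}(\tilde{F})$ is stated explicitly; without it, your proof as written has a genuine hole at exactly the step you flagged.
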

\begin{proof}
Set $f(t) = \frac{\mathcal{P}_{\tilde{F}}(t)}{t^{m(\tilde{F})}}$. Then 
\[
 f(t) = \sum_{I \in \mathcal{M}(\tilde{F})} c_I \left( \frac{t - 1}{t} \right)^{i_3}. 
\]
Let $T = \frac{t - 1}{t}$. Then $f(t)$ is the evaluation of the nonzero
univariate polynomial 
$$\sum_{I \in \mathcal{M}(\tilde{F})} c_I x^{i_3} \ \text{ at } \ T.$$ 
Since $T$ is transcendental over $\K$, we conclude that $f(t)$ is nonzero. 
Thus, $\mathcal{P}_{\tilde{F}}(t) \neq 0$.
\end{proof}

Assume that $p(x) = \sum_{i = 0}^{d} a_i x^i \in \K(c)[x]$ 
is a nonzero polynomial solution of~\eqref{EQ:polysolsec2}, 
where $c$ is transcendental over $\K(x)$, $d$ and $a_i$'s are unkown. Then 
\begin{align*}
 \Delta p(x) & =  a_d \cdot d \cdot x^{d - 1} +  
 \  \text{ lower terms in } x, \\
 \Delta^2 p(x) & = a_d \cdot d \cdot (d - 1) \cdot x^{d - 2} +  
 \  \text{ lower terms in } x.
\end{align*}
Thus, for each $I = (i_1, i_2, i_3) \in \N^3$ with $||I|| = D$, 
we have 
\begin{align}
p^{i_1} (\Delta p)^{i_2} (\Delta^2 p)^{i_3} 
& = (a_d x^d)^{i_1} (a_d d x^{d-1})^{i_2} (a_d d (d - 1) x^{d-2})^{i_3} 
+ \text{ lower terms in } x \nonumber \\
 & = a_d^{i_1 + i_2 + i_3} d^{i_2} [d(d-1)]^{i_3} x^{d (i_1 + i_2 + i_3) - (i_2 + 2 i_3)} + \text{ lower terms in } x \nonumber \\ 
 & = a_d^D d^{i_2} [d(d-1)]^{i_3} x^{d D - (i_2 + 2 i_3)} + 
 \text{ lower terms in } x. \label{EQ:leadterm} 
\end{align}

Based on the above argument, we have the following proposition. 

\begin{proposition} \label{PROP:polysoldegbound}
Let $p(x)$ be a nonzero polynomial solution of~\eqref{EQ:polysolsec2} with degree $d$. Then $\mathcal{P}_{\tilde{F}}(d) = 0$.
\end{proposition}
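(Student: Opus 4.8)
The plan is to extract the leading term (highest power of $x$) on the left-hand side of~\eqref{EQ:polysolsec2} after substituting $p(x)$, and argue that its coefficient must vanish, which will be precisely the condition $\mathcal{P}_{\tilde{F}}(d) = 0$. Write $p(x) = a_d x^d + \text{lower terms}$ with $a_d \neq 0$. First I would invoke~\eqref{EQ:leadterm}: for each multi-index $I = (i_1,i_2,i_3)$ with $\|I\| = D$, the monomial $c_I \, p^{i_1} (\Delta p)^{i_2} (\Delta^2 p)^{i_3}$ contributes $c_I \, a_d^D \, d^{i_2}[d(d-1)]^{i_3} x^{dD - (i_2 + 2i_3)}$ plus strictly lower-order terms in $x$. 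Summing over $I \in \mathcal{E}(\tilde F)$, the largest exponent of $x$ that can appear is $dD - m(\tilde F)$, where $m(\tilde F) = \min\{ i_2 + 2i_3 : I \in \mathcal{E}(\tilde F)\}$, and it is achieved exactly by those $I$ in $\mathcal{M}(\tilde F)$.

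Next I would collect the coefficient of $x^{dD - m(\tilde F)}$ in $\tilde{F}(p, \Delta p, \Delta^2 p)$. By the preceding paragraph it equals
\[
 a_d^D \sum_{I \in \mathcal{M}(\tilde F)} c_I \, d^{i_2} [d(d-1)]^{i_3} = a_d^D \, \mathcal{P}_{\tilde{F}}(d),
\]
by the very definition of the indicial polynomial $\mathcal{P}_{\tilde{F}}(t) = \sum_{I \in \mathcal{M}(\tilde F)} c_I t^{i_2}[t(t-1)]^{i_3}$ evaluated at $t = d$. Since $p$ is a solution, $\tilde{F}(p, \Delta p, \Delta^2 p)$ is the zero polynomial in $x$, so in particular this coefficient vanishes: $a_d^D \, \mathcal{P}_{\tilde{F}}(d) = 0$. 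Because $a_d \neq 0$ and $\K$ is a field (hence an integral domain), we conclude $\mathcal{P}_{\tilde{F}}(d) = 0$.

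The one point that needs a little care — and which I expect to be the main (though minor) obstacle — is making rigorous the claim that there is no cancellation at the top degree: that when we sum the contributions~\eqref{EQ:leadterm} over $I \in \mathcal{E}(\tilde F)$, the terms of $x$-degree exceeding $dD - m(\tilde F)$ genuinely cannot occur (immediate from $i_2 + 2i_3 \geq m(\tilde F)$ for all relevant $I$) and that the surviving coefficient is exactly $a_d^D \mathcal{P}_{\tilde F}(d)$ with no stray contributions from the ``lower terms in $x$'' of monomials with smaller $i_2 + 2i_3$. This is handled by observing that a monomial with index $I$ contributes only to $x$-degrees at most $dD - (i_2 + 2i_3) \le dD - m(\tilde F)$, with equality forcing $I \in \mathcal{M}(\tilde F)$ and forcing us to take the genuine leading term. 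One should also note the edge cases $d = 0$ and $d = 1$: the formulas for $\Delta p$ and $\Delta^2 p$ in~\eqref{EQ:leadterm} still hold with the convention that $d^{i_2}$ or $[d(d-1)]^{i_3}$ may vanish, and the identity $a_d^D \mathcal{P}_{\tilde F}(d) = 0$ remains the correct conclusion; in fact this is consistent with $\mathcal{P}_{\tilde F}(0) = 0$ and, typically, $\mathcal{P}_{\tilde F}(1) = 0$, so no polynomial solution is excluded spuriously. Everything else is the routine bookkeeping already carried out in~\eqref{EQ:leadterm} and Proposition~\ref{PROP:nonzeroindicialpol}, the latter guaranteeing $\mathcal{P}_{\tilde F} \not\equiv 0$ so that the condition $\mathcal{P}_{\tilde F}(d) = 0$ is a genuine finite constraint on $d$.
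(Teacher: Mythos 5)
Your proposal is correct and follows essentially the same route as the paper: both extract the coefficient of $x^{dD - m(\tilde F)}$ in $\tilde{F}(p,\Delta p,\Delta^2 p)$ using~\eqref{EQ:leadterm}, identify it as $a_d^D\,\mathcal{P}_{\tilde F}(d)$, and conclude from $a_d \neq 0$. Your additional remarks on the absence of top-degree interference from indices with $i_2+2i_3 > m(\tilde F)$ and on the edge cases $d=0,1$ are sound but not needed beyond what the paper's argument already implicitly covers.
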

\begin{proof}
Since $p(x)$ is a nonzero polynomial solution of~\eqref{EQ:polysolsec2} with degree $d$, we have that 
\[
 [x^{d D - m(\tilde{F})}] \left( \tilde{F}(p(x), \Delta p(x), \Delta^2 p(x)) \right) = 0. 
\]
By~\eqref{EQ:leadterm}, the above equation is equivalent to 
 $a_d^D \cdot \mathcal{P}_{\tilde{F}}(d) = 0$. 
On account of $a_d \neq 0$, we conclude that $\mathcal{P}_{\tilde{F}}(d) = 0$.
\end{proof}

By Proposition~\ref{PROP:nonzeroindicialpol} and the above one, we can compute a degree bound of polynomial solutions of $\tilde{F}$
by computing non-negative integers solutions of the indicial polynomial at infinity. 
Afterwards, we may compute the polynomial solutions by making an ansatz and then solving the corresponding 
algebraic equations by using Gr\"{o}bner bases.

\section{Algorithms for computing rational solutions of autonomous first-order \AODEs} \label{SECT:algo}

Based on the results of the previous section, we summarize an algorithm for 
determining rational solutions of an autonomous first-order \AODE.

\begin{algo} \label{ALGO:firstseparable}
Given a separable difference equation $\frac{P_1(y(x + 1))}{Q_1(y(x+ 1))} = \frac{P_2(y(x))}{Q_2(y(x))}$ 
with $\gcd(P_i,Q_i) = 1$ and $\deg \frac{P_1}{Q_1} = \deg \frac{P_2}{Q_2} \geq 1$, $i = 1, 2$, 
compute a bound $N$ for the degrees of its rational solutions.

\begin{enumerate}
	\item Set $N=0$. If $\frac{P_1(z)}{Q_2(z)} = \frac{P_1(z)}{Q_2(z)}$, then output $N$. Otherwise, go to the next step. 
	\item Compute the set of constant candidates $\mathcal{C}$ of the given separable 
difference equation by Definition~\ref{def:C}. 
	\item Let $c_1, c_2, \ldots, c_m$ be nonzero elements in $\mathcal{C}$. Let 
	\begin{equation*}
	\tilde{P}_j(z,w) = w^n P_j \left(\frac{z}{w}\right), \quad \text{ and } \quad \tilde{Q}_j(z,w) = w^n Q_j \left(\frac{z}{w}\right), \qquad j = 1, 2.
	\end{equation*}
	For $i$ from $1$ to $m$ do
	\begin{enumerate}
		\item Consider the difference system
		\begin{equation} \label{EQ:c_i}
		\left \{ 
		\begin{aligned}
		\tilde{P}_1(A(x+1),B(x+1)) = c_i \cdot \tilde{P}_2(A(x),B(x)),\\
		\tilde{Q}_1(A(x+1),B(x+1)) = c_i \cdot \tilde{Q}_2(A(x),B(x)).
		\end{aligned}
		\right.
		\end{equation}
		where $A,B$ are unknown functions.
		Derive the following two nonzero autonomous second-order {\AODEs} for $A(x)$ and B(x) from the above equations by using Algorithm~\ref{alg:system_to_single_DEs}:
		\[
		F_{i, A}(A(x), A(x +1), A(x+2)) = 0, \quad \text{ and } \quad 			F_{i, B}(B(x), B(x + 1), B(x+2)) =0,  
		\]
		where $F_{i, A}$ and  $F_{i, B}$ are homogeneous polynomials in  $\K[y, z, w] \setminus \{ 0 \}$.
		\item Determine the indicial polynomials $\mathcal{P}_{F_{i,A}}$ and $\mathcal{P}_{F_{i,B}}$ of $F_{i,A}$ and $F_{i,B}$, respectively. 
                       Let 
\begin{align*}
D_{i, A} & = \{ \text{non-negative integer solutions of } \mathcal{P}_{F_{i,A}}(t)\}, \\
D_{i, B} & = \{ \text{non-negative integer solutions of } \mathcal{P}_{F_{i,B}}(t)\}.
\end{align*}
		\item Set $N =\max \{\{N \} \cup D_{i, A} \cup D_{i, B}\}$.
	\end{enumerate}
\item Return $N$. 
\end{enumerate}

\end{algo}

The termination of the above algorithm is evident.
The correctness is a consequence of  Proposition~\ref{prop:f=f}, Theorem~\ref{THM:exist_c_C}, the correctness of Algorithm~\ref{alg:system_to_single_DEs} and Proposition~\ref{PROP:polysoldegbound}.

To avoid triviality, we only consider non-constant rational solutions below.

\begin{algo} \label{ALGO:firstautonomous}
Given an irreducible autonomous first-order {\AODE} $F(y(x), y(x + 1)) = 0$, compute a non-constant rational solution or return NULL.  
\begin{enumerate}
\item If $\deg_y(F) \neq \deg_z(F)$, then output NULL. Otherwise, go to the next step. 
\item Compute the genus $g$ of the corresponding curve $\mathcal{C}_F$ defined by $F(y,z)=0$. If $g \neq 0$, then output NULL. Otherwise, go to the next step.
\item By using \cite[Algorithm~1]{Vo2018Deciding}, determine an optimal parametrization for $\mathcal{C}_F$, say $\mathcal{P}(t)=(p_1(t), p_2(t))$.
\item Apply Algorithm~\ref{ALGO:firstseparable} to compute a bound $N$ for the degrees of rational solutions of the associated separable difference equation $p_1(y(x+1))=p_2(y(x))$.  
\item Set $M=N \cdot \deg_t p_1$. Use \cite[Algorithm~4.16]{FengGao2008} to determine a non-constant rational solution of the given {\AODE} whose degree at most $M$. Return the rational solution if there is any. Otherwise return NULL.
\end{enumerate}
\end{algo}

The termination of the above algorithm is clear.

\begin{theorem}
Algorithm~\ref{ALGO:firstautonomous} is correct.
\end{theorem}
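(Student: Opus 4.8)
We must show two things: (\emph{soundness}) any rational function returned by Algorithm~\ref{ALGO:firstautonomous} is a non-constant rational solution of $F(y(x),y(x+1))=0$, and (\emph{completeness}) if such a solution exists, the algorithm returns one. Soundness is immediate, since the only place the algorithm outputs a rational function is Step~5, where it is produced by \cite[Algorithm~4.16]{FengGao2008}; by the specification of that algorithm the returned function is a non-constant rational solution of $F(y(x),y(x+1))=0$ of degree at most $M$.

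For completeness, assume $y_0(x)\in\K(x)$ is a non-constant rational solution. Because $F$ is autonomous, $y_0(x+c)$ is again a solution, and it is a strong rational general solution when $c$ is transcendental over $\K(x)$. Hence Theorem~\ref{thm:StrongSolution} applies and tells us that $F$ is absolutely irreducible and that $\mathcal{C}_F$ has genus zero, while Proposition~\ref{prop:ReducedEquation} gives $\deg_y F=\deg_z F$ and, for the optimal parametrization $\mathcal{P}=(p_1,p_2)$ computed in Step~3, $\deg_t p_1=\deg_t p_2=\deg_z F=\deg_y F\geq 1$. Consequently Steps~1 and~2 do not return \texttt{NULL}, the associated separable equation $p_1(y(x+1))=p_2(y(x))$ and the parametrization $\mathcal{P}$ meet the input requirements of Algorithm~\ref{ALGO:firstseparable}, and $F$ is absolutely irreducible as required by \cite[Algorithm~4.16]{FengGao2008}.

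The remaining task is to bound $\deg y_0$ by $M=N\cdot\deg_t p_1$. Running the argument in the proof of Theorem~\ref{thm:OneToOne}(1) with the parametrization $(y_0(x),y_0(x+1))$ of $\mathcal{C}_F$ in place of $(s(x,c),s(x+1,c))$ --- legitimate because that argument only uses that the pair parametrizes $\mathcal{C}_F$ --- produces $\omega_0=\mathcal{P}^{-1}\circ(y_0(x),y_0(x+1))$, which lies in $\K(x)$ since the optimal parametrization is proper, satisfies $p_1(\omega_0(x+1))=p_2(\omega_0(x))$, and has $y_0(x)=p_1(\omega_0(x))$. Thus $\omega_0$ is a rational solution of the separable equation, and it is non-constant because $p_1(\omega_0(x))=y_0(x)$ is non-constant. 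By the correctness of Algorithm~\ref{ALGO:firstseparable} --- which rests on Proposition~\ref{prop:f=f}, Theorem~\ref{THM:exist_c_C}, Corollary~\ref{COR:elimination} and Proposition~\ref{PROP:polysoldegbound} --- the output $N$ satisfies $\deg\omega_0\leq N$. Since $y_0=p_1\circ\omega_0$ is a composition of rational functions, \cite[Proposition~1.2, item~11]{Binder1996} yields $\deg y_0=\deg_t p_1\cdot\deg\omega_0\leq N\cdot\deg_t p_1=M$. Therefore a non-constant rational solution of $F(y(x),y(x+1))=0$ of degree at most $M$ exists, so \cite[Algorithm~4.16]{FengGao2008} returns one in Step~5 and the algorithm does not output \texttt{NULL}; together with soundness this proves correctness.

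I expect the main obstacle to be the bookkeeping in the third paragraph: checking that the construction of $\omega_0$ indeed gives a rational function over $\K$ (not merely over $\overline{\K}$) that is a \emph{non-constant} solution of the separable equation, and transporting the degree bound $N$ for the separable equation to the bound $M=N\cdot\deg_t p_1$ for the original one through the composition-degree formula. A secondary matter is verifying that every hypothesis invoked along the way --- coprimality and matching numerator/denominator degrees of $p_1$ and $p_2$ for Algorithm~\ref{ALGO:firstseparable}, and absolute irreducibility of $F$ for \cite[Algorithm~4.16]{FengGao2008} --- is actually in force, which is precisely what Proposition~\ref{prop:ReducedEquation} and Theorem~\ref{thm:StrongSolution} supply once a non-constant rational solution is assumed to exist.
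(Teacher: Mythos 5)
Your proof is correct and follows essentially the same route as the paper's: Proposition~\ref{prop:ReducedEquation} and Theorem~\ref{thm:StrongSolution} justify Steps 1--2, Theorem~\ref{thm:OneToOne} transfers the degree bound $N$ from the separable equation to the bound $M=N\cdot\deg_t p_1$ for the original equation, and \cite[Algorithm~4.16]{FengGao2008} handles Step 5. You merely make explicit (via the construction of $\omega_0=\mathcal{P}^{-1}\circ(y_0(x),y_0(x+1))$ and the composition-degree formula of \cite{Binder1996}) the degree-transfer step that the paper leaves implicit in its appeal to Theorem~\ref{thm:OneToOne}.
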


\begin{proof}
Assume that the difference equation $F(y(x),y(x+1))=0$ admits a rational solution. 
Then it has a strong rational general solution. By Proposition~\ref{prop:ReducedEquation}, we have $\deg_yF = \deg_zF$. 
By Theorem~\ref{thm:StrongSolution}, the genus of the corresponding curve $\mathcal{C}_F$ is of genus zero.
In this case, $\mathcal{C}_F$ admits an optimal parametrization $\mathcal{P}(t)=(p_1(t),p_2(t)) \in \mathbb{K}(t) \times \mathbb{K}(t)$ and the given difference equation has an associated separable difference equation.
By Theorem~\ref{thm:OneToOne}, we see that if $N$ is a bound for the degrees of rational solutions of the associated separable difference equation, then $N \cdot \deg_tp_1$ is a bound for the degrees of rational solutions of the given difference equation. 
The correctness of step~$5$ follows from that of~\cite[Algorithm~4.16]{FengGao2008}.
\end{proof}



\begin{example} \label{EX:FengGaoHuang2008}
Consider the following irreducible autonomous first-order \AODE:
\begin{equation} \label{EQ:FengGaoHuang2008}
 F = (12 y(x) + 49) y(x+1)^2 - (12 y^2 + 62 y + 56) y(x + 1) + y(x)^2 +  8 y(x) + 16 = 0. 
\end{equation}
It is clear to see that $\deg_y(F) = \deg_z(F) = 2$. The corresponding algebraic curve is of genus zero and it has an optimal parametrization
$$\mathcal{P}(t)=(p_1(t), p_2(t) ) = \left( \frac{9 t^2 - 12 t + 4}{12 t},\frac{9 t^2 + 36 t + 4}{12 (t + 4)} \right).$$  
Using the above parametrization, we can derive the following associated separable difference equation of~\eqref{EQ:FengGaoHuang2008}:
\begin{equation*} 
\frac{9 y(x + 1)^2 - 12 y(x+1) + 4}{y(x+1)} = \frac{9 y(x)^2 + 36 y(x) + 4}{y(x) + 4},
\end{equation*} 
where $P_1(z) = 9 z^2 - 12 z + 4, Q_1(z) = z, P_2(z) = 9 z^2 + 36 z + 4$ and $Q_2(z) = z + 4$. 

It is clear that  $\frac{P_1(z)}{Q_2(z)} \neq \frac{P_1(z)}{Q_2(z)}$. So we can skip step 1 of Algorithm~\ref{ALGO:firstseparable}. 
By computation, we find that the set of candidate constants $\mathcal{C} = \{7 + 4 \sqrt{3}, 7 - 4 \sqrt{3}, 0, 1\}$. 
Using other steps of Algorithm~\ref{ALGO:firstseparable}, we see that the degree bound for rational solutions of the associated separable difference equation is 2.
Thus, the degrees of rational solutions of the given difference equation are bounded by 4.
By applying \cite[Algorithm~4.1]{FengGao2008}, we can determine a rational solution, say
\[
y(x) = \frac{(1 - 4 x + 2 x^2)^2}
{2x (1 - 3 x + 2 x^2)}.
\]


\end{example}

\section*{Acknowledgement}
We thank Georg Grasegger and Matteo Gallet for valuable suggestions and comments on revising our paper.


\bibliographystyle{plain}

\end{document}